\renewcommand{\arraystretch}{0.95}    
\pgfplotsset{compat=1.18}
\setlist{noitemsep,leftmargin=*,topsep=2pt,partopsep=1pt}
\newtheorem{theorem}{Theorem}
\newtheorem{proposition}{Proposition}
\newtheorem{corollary}{Corollary}
\newtheorem{assumption}{Assumption}
\theoremstyle{definition}
\newtheorem{definition}{Definition}
\theoremstyle{remark}
\newtheorem{remark}{Remark}
\titlespacing*{\section}{0pt}{0.8ex plus 0.2ex}{0.5ex plus 0.1ex}
\titlespacing*{\subsection}{0pt}{0.6ex plus 0.1ex}{0.3ex plus 0.1ex}
\titlespacing*{\subsubsection}{0pt}{0.4ex plus 0.1ex}{0.2ex plus 0.1ex}
\let\oldthebibliography\thebibliography
\renewcommand\thebibliography[1]{%
  \oldthebibliography{#1}%
  \setlength{\parskip}{0pt}%
  \setlength{\itemsep}{0pt plus 0.3pt}%
}
\title{RAPID Quantum Detection and Demodulation of Covert Communications: Breaking the Noise Limit with Solid-State Spin Sensors}
\author{Amirhossein~Taherpour~\orcidlink{0000-0003-4647-102X},~\IEEEmembership{Member,~IEEE},
        Abbas~Taherpour~\orcidlink{0000-0003-0706-5774},~\IEEEmembership{Senior~Member,~IEEE},
        and~Tamer~Khattab~\orcidlink{0000-0003-2347-9555},~\IEEEmembership{Senior~Member,~IEEE}
\thanks{Amirhossein Taherpour is with the Department of Electrical Engineering, Columbia University, New York, NY 10027 USA (e-mail: at3532@columbia.edu).}%
\thanks{Abbas Taherpour is with the Department of Electrical Engineering, Imam Khomeini International University, Qazvin 34149-16818, Iran (e-mail: taherpour@eng.ikiu.ac.ir).}%
\thanks{Tamer Khattab is with the Department of Electrical Engineering, Qatar University, Doha 2713, Qatar (e-mail: tkhattab@ieee.org).}%
}
\begin{document}
\maketitle
\thispagestyle{empty}


\begin{abstract}
We introduce a comprehensive framework for the detection and demodulation of covert electromagnetic signals using solid-state spin sensors. Our approach, named RAPID, is a two-stage hybrid strategy that leverages nitrogen-vacancy (NV) centers to operate below the classical noise floor employing a robust adaptive policy via imitation and distillation. We first formulate the joint detection and estimation task as a unified stochastic optimal control problem, optimizing a composite Bayesian risk objective under realistic physical constraints. The RAPID algorithm solves this by first computing a robust, non-adaptive baseline protocol grounded in the quantum Fisher information matrix (QFIM), and then using this baseline to warm-start an online, adaptive policy learned via deep reinforcement learning (Soft Actor-Critic). This method dynamically optimizes control pulses, interrogation times, and measurement bases to maximize information gain while actively suppressing non-Markovian noise and decoherence. Numerical simulations demonstrate that the protocol achieves a significant sensitivity gain over static methods, maintains high estimation precision in correlated noise environments, and, when applied to sensor arrays, enables coherent quantum beamforming that achieves Heisenberg-like scaling in precision. This work establishes a theoretically rigorous and practically viable pathway for deploying quantum sensors in security-critical applications such as electronic warfare and covert surveillance.
\end{abstract}

\begin{IEEEkeywords}
Nitrogen-vacancy (NV) center, quantum sensing, adaptive detection, parameter estimation, covert electromagnetic signals, dynamical decoupling, quantum Fisher information (QFI), positive operator-valued measure (POVM), non-Markovian noise, electronic warfare, quantum magnetometry
\end{IEEEkeywords}

\section{Introduction}
\label{sec:introduction}

\IEEEPARstart{Q}{uantum} sensing with nitrogen-vacancy (NV) centers in diamond has emerged as a transformative technology for detecting faint electromagnetic signals that are inaccessible to conventional sensors~\cite{Degen2017, Rondin2014}. First identified as stable quantum emitters in the 1970s~\cite{Loubser1977} and later developed into high-performance sensors~\cite{Doherty2013}, NV centers combine femtotesla-scale magnetic field sensitivity with nanoscale spatial resolution and robust operation at room temperature. These unique attributes have enabled groundbreaking applications ranging from biological imaging to materials science~\cite{Degen2017, Casacio2021}.

A critical challenge in modern security and defense is the detection and characterization of covert communications and low-probability-of-intercept (LPI) signals. Such signals are deliberately designed to remain hidden beneath the noise floor, often employing techniques like frequency hopping or spectral masking within congested electromagnetic environments, making them undetectable by classical receivers whose sensitivity is limited by thermal noise~\cite{Stinco2020, Wang2022}. NV centers offer a compelling solution, with demonstrated capabilities to detect magnetic fields below $1~\mathrm{pT}/\sqrt{\mathrm{Hz}}$ and resolve complex modulations, even in the presence of strong ambient fields~\cite{Barry2020, Greenspon2023}. This opens the door to critical applications such as non-invasive surveillance~\cite{Shi2023}, battlefield awareness~\cite{Lenahan2022}, and non-proliferation monitoring~\cite{Wickenbrock2016}.

\subsection{Related Work}

The application of quantum sensing to detect weak signals is an active area of research. Early protocols focused on static measurement schemes, such as Ramsey interferometry, which are effective but not optimized for dynamic or unknown environments. To combat decoherence, which limits the interrogation time and thus sensitivity, dynamical decoupling (DD) techniques were introduced to effectively filter environmental noise and extend coherence times~\cite{Naydenov2011}.

More advanced protocols have begun to incorporate adaptive feedback, where measurement outcomes are used to update subsequent control strategies in real time. Such adaptive methods have shown promise for optimizing parameter estimation under specific noise models, such as Markovian noise~\cite{Wang2021}, but a general framework for handling the correlated, non-Markovian noise prevalent in realistic scenarios~\cite{Chen2021} remains an open challenge. Furthermore, for applications like direction-of-arrival (DoA) estimation, arrays of quantum sensors have been proposed. While incoherent processing of sensor outputs offers some benefit, coherent processing via \textit{quantum beamforming} promises a more significant advantage by exploiting quantum correlations to achieve superior scaling in precision~\cite{li2023optlaser, wang2023micromachines}.

However, existing approaches often treat the tasks of signal detection and parameter estimation as separate problems. This separation is suboptimal, as the optimal strategy for detection is not necessarily optimal for estimation. A unified approach that co-optimizes both objectives, adapts to complex noise environments, and is grounded in the fundamental limits of quantum mechanics is required to unlock the full potential of NV-based sensors for security-critical applications.

\subsection{Motivation and Contributions}

The primary motivation for this work is to develop a practical and theoretically rigorous framework for quantum sensing that bridges the gap between the fundamental limits of quantum estimation theory and the demands of real-time operation in noisy, uncertain environments. We aim to create a protocol that not only approaches the quantum Cramér-Rao bound (QCRB) but also dynamically adapts its strategy to unknown signal parameters and environmental fluctuations, thereby maximizing information extraction from sub-noise floor signals.

To this end, we introduce a robust adaptive policy via imitation and distillation (RAPID), a novel two-stage hybrid optimization strategy. Our main contributions are as follows:
\begin{enumerate}
    \item We formulate a \textbf{unified optimization framework} for joint detection and demodulation, defined by a composite Bayesian risk objective that judiciously balances detection reliability and estimation fidelity under a comprehensive set of realistic physical and quantum constraints.
    \item We propose RAPID a \textbf{novel two-stage hybrid algorithm} to solve this complex, non-convex problem. Stage 1 computes a robust, non-adaptive baseline protocol by solving a deterministic version of the problem using a quantum-native optimization method based on projected stochastic natural gradient descent (PSNGD). Stage 2 uses this baseline to warm-start an online, adaptive policy learned via deep reinforcement learning (Soft Actor-Critic), ensuring both high performance and sample efficiency.
    \item We provide a \textbf{comprehensive theoretical analysis} of the RAPID algorithm, establishing rigorous convergence guarantees for both stages. Our analysis formally connects the protocol's performance to fundamental quantum limits, including the QCRB and the Holevo bound, and characterizes the scaling of precision with sensor count and resources.
    \item We demonstrate the \textbf{practical advantage of adaptation} through extensive numerical simulations. Our results quantify the significant performance gains of the RAPID protocol over static methods in sensitivity, non-Markovian noise mitigation, and robustness to hardware imperfections. We further show that when applied to sensor arrays, the coherent processing enabled by our framework achieves Heisenberg-like scaling in precision.
\end{enumerate}

\subsection{Paper Organization}
The remainder of this paper is organized as follows. Section~\ref{sec:system_model} details the physical model of the NV-center sensor and the covert signal environment. Section~\ref{sec:optimization} formulates the joint detection and demodulation task as a constrained stochastic optimal control problem. Section~\ref{sec:solution} presents our two-stage hybrid solution algorithm, RAPID, and justifies its design. Section~\ref{sec:theoretical_analysis} provides a rigorous analysis of the algorithm's convergence properties. Section~\ref{sec:quantum_limits} discusses the practical implications and connects the protocol's performance to fundamental quantum information-theoretic limits. Section~\ref{sec:simulation_setup} presents the simulation results, and Section~\ref{sec:conc} concludes the paper.

\section{System Model and Assumptions}
\label{sec:system_model}

We consider a solid-state quantum sensing system based on NV centers in diamond, designed for the demodulation of covert LPI communications. Such signals are engineered to lie below the detection thresholds of conventional receivers, yet can be resolved by exploiting the quantum-limited sensitivity of NV centers. The system simultaneously detects the presence of faint transmissions and estimates the information-bearing parameters—including amplitude, carrier frequency, phase, and field orientation—enabling coherent demodulation even in the regime where classical methods fail. The unique combination of atomic-scale magnetic sensitivity, optical readout, and room-temperature operation makes NV centers suitable for breaking conventional noise limits in this domain.

The sensing process is constrained by several physical factors. Input fields must remain detectable below the classical noise floor while exceeding intrinsic NV sensor noise. Environmental magnetic fluctuations, spin decoherence, and ensemble inhomogeneities reduce achievable sensitivity and limit demodulation accuracy. The crystallographic orientation of each NV center dictates its projection of external fields, thereby shaping the reconstruction of encoded information. Furthermore, excitation limits prevent optical saturation, lattice heating, and nonlinear spin responses, all of which degrade quantum-limited performance. These constraints are incorporated via quantum dynamics, stochastic signal modeling, and Fisher information analysis, and directly inform the practical demodulation capabilities of the proposed sensor.

\subsection{Physical Model and Signal Dynamics}

The covert transmission is modeled in complex baseband form as
\begin{equation}
    s(t) = A e^{j(2\pi f_c t + \phi + \psi(t))} u(t),
\end{equation}
where $A$ is the signal amplitude (1--100 nT peak), $f_c$ is the carrier frequency, $\phi$ is the initial phase, $u(t)$ is a bounded complex envelope, and $\psi(t)$ captures slow phase noise from the transmitter or propagation environment. The physical magnetic field corresponds to $\Re\{s(t)\}$, but we work in complex baseband for convenience.

At the site of the $k$-th NV center, the total magnetic field projected along its crystallographic axis is
\begin{equation}
    B_k(t) = \Re\!\big\{ s(t) \, \alpha_k(\boldsymbol{\theta}) \big\}
    + B_{\mathrm{env},k}(t)
    + w_k(t)
    + n_k(t),
\end{equation}
where $\alpha_k(\boldsymbol{\theta}) \in \mathbb{C}$ is the projection coefficient onto the NV axis defined by orientation $\boldsymbol{\theta}$, $B_{\mathrm{env},k}(t)$ represents deterministic environmental contributions, $w_k(t)$ is zero-mean additive white Gaussian noise with variance $\sigma_w^2$, and $n_k(t)$ is a colored stochastic process with exponential autocorrelation
\begin{equation}
    \langle n_k(t) n_k(t') \rangle 
    = \frac{\sigma_n^2}{2\tau_c} e^{-|t-t'|/\tau_c}, 
    \quad \tau_c \in [0.1,10]\,\mu\mathrm{s},
\end{equation}
where $\sigma_n^2$ is the total colored noise power. When $\tau_c \gtrsim 1/\Gamma_\phi$, non-Markovian noise strongly impacts the available quantum Fisher information (QFI), limiting parameter estimation precision. Multipath effects are represented as
\begin{equation}
    \delta B_k^{(\mathrm{env})} 
    = \sum_{i=1}^{N_\mathrm{paths}} 
    \Gamma_i \alpha_k^{(i)} e^{j\Delta\phi_i},
\end{equation}
with $\Gamma_i$ the attenuation, $\Delta\phi_i$ the phase shift, and $\alpha_k^{(i)}$ the NV-axis projection. Such multipath interference can be mitigated through quantum-limited parameter estimation.

\subsection{Spin–Field Interaction and Quantum Response}

The Hamiltonian of the $k$-th NV center is
\begin{equation}
    H_k(t) = D S_{z,k}^2 
    + \gamma_e \mathbf{B}_k(t) \cdot \mathbf{S}_k 
    + H_{\mathrm{ctrl},k}(t),
\end{equation}
where $D \approx 2.87$ GHz is the zero-field splitting, $\gamma_e$ is the electron gyromagnetic ratio, and $\mathbf{S}_k$ are the spin-1 operators. The control term $H_{\mathrm{ctrl},k}(t)$ implements dynamical decoupling and coherent manipulation to preserve sensitivity, parameterized by a piecewise-constant control sequence $\mathbf{u}^{(n)}$. For weak fields $|\mathbf{B}_k| \ll D/\gamma_e \approx 100$ mT, the Zeeman term can be linearized as
\begin{equation}
    H_k(t) \approx D S_{z,k}^2 
    + \gamma_e B_{z,k}(t) S_{z,k} 
    + \gamma_e \big( B_{x,k}(t) S_{x,k} + B_{y,k}(t) S_{y,k} \big),
\end{equation}
with transverse couplings treated perturbatively. This approximation highlights the direct mapping between external fields and spin evolution, which underpins demodulation below the classical noise floor.

The effective coherence time $T_2^{\mathrm{eff}}$ bounds achievable demodulation fidelity, constrained by $T_{\min} \leq T_2^{\mathrm{eff}} \leq T_2$, with the ceiling set by $T_1$. Decoherence rates $\Gamma_1=1/T_1$ and $\Gamma_\phi=1/T_2$ are included in the density-matrix evolution. The spin dynamics over the $n$-th sensing interval of duration $T^{(n)}$ are governed by a positive trace-preserving (CPTP)  $\Phi_k$:
\begin{equation}
    \rho_k^{(n+1)} = \Phi_k(\rho_k^{(n)}, \mathbf{u}^{(n)}, T^{(n)}),
\end{equation}
which captures unitary evolution under $H_k(t)$, the effect of control pulses, and non-unitary decoherence $\mathcal{L}_{\mathrm{decoh}}$.

\begin{figure}[t!]
    \centering
    \includegraphics[width=0.8\columnwidth, height=2.5 in]{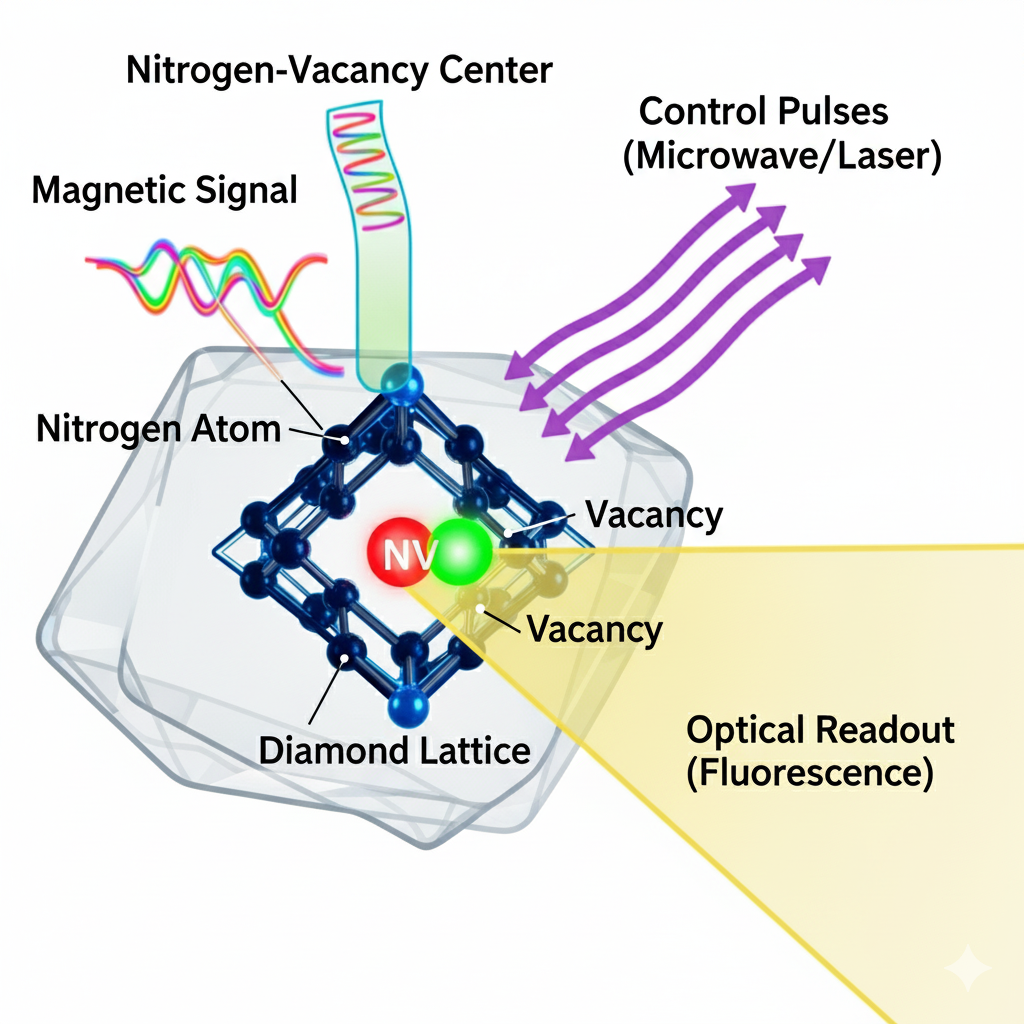}
    \caption{System model for quantum detection and demodulation of covert communications. The figure illustrates the sub-noise magnetic signal, the NV center in a diamond lattice, the control pulses for quantum manipulation, and the optical readout which yields the demodulated data.}
    \label{fig:system_model}
\end{figure}

\subsection{Quantum Measurement and Resource Constraints}

Demodulation requires extracting signal parameters from spin-dependent fluorescence. This is modeled by a positive operator-valued measure (POVM) $\{\Pi_m^{(n)}\}$ acting on $\rho_k^{(n)}$, with probability
\begin{equation}
    p(y_k^{(n)} = m \mid \rho_k^{(n)}) = \mathrm{Tr}\!\left(\Pi_m^{(n)} \rho_k^{(n)}\right),
\end{equation}
and measurement operators
\begin{equation}
    \Pi_m^{(n)} = \eta |m\rangle\langle m| + \big(1-\eta\big)\tfrac{\mathbb{I}}{3},
\end{equation}
where $\eta \in [0,1]$ models photon collection efficiency. The average number of collected photons is proportional to the excitation number $S^{(n)}$, which is bounded to avoid nonlinearities:
\begin{equation}
    0 \leq S^{(n)} \leq S_{\max}.
\end{equation}

The control amplitudes and sensing durations reflect concrete hardware and thermal limits. Each sensing interval $n$ is driven by a control vector $\mathbf{u}^{(n)}\in\mathbb{R}^p$ (or $\mathbb{C}^p$ if phases are controlled) representing the instantaneous drive amplitudes on $p$ independent channels (e.g., microwave in-phase and quadrature components). A basic instantaneous constraint is
\begin{equation}
\|\mathbf{u}^{(n)}\|_\infty \le u_{\max},\qquad \forall n,
\end{equation}
ensuring no drive exceeds the amplifier or arbitrary waveform generator rating. If $\mathbf{u}$ is expressed in volts then $u_{\max}$ has units of volts (typically $0.1$–$5\,$V); if $\mathbf{u}$ is a magnetic drive field then $u_{\max}$ is in tesla (often $\mu$T scale).

Accumulated control energy also affects thermal stability. To capture heating effects, the time-integrated squared amplitude is constrained:
\begin{equation}
\sum_{n=1}^{N_s} T^{(n)} \|\mathbf{u}^{(n)}\|_2^2 \le U_{\mathrm{tot}}^{\max},
\end{equation}
where $N_s$ is the number of sensing intervals. The budget $U_{\mathrm{tot}}^{\max}$ is determined by cooling and sample constraints, typically ranging from mJ (thermally limited setups) to Joules (well-cooled systems).

Each sensing interval duration must also lie between a minimum and coherence-limited maximum:
\begin{equation}
T_{\min} \le T^{(n)} \le T_2^{\mathrm{eff}},\qquad \forall n,
\end{equation}
where $T_{\min}$ ensures sufficient photon statistics (tens of ns in fast readout) and $T_2^{\mathrm{eff}}$ reflects the coherence under decoupling (from $\mu$s to ms). The upper bound prevents wasted intervals beyond coherence time.

Finally, the overall protocol must fit within a total duration budget:
\begin{equation}
\sum_{n=1}^{N_s} T^{(n)} \le T_{\mathrm{tot}},
\end{equation}
with $T_{\mathrm{tot}}$ capturing latency or duty-cycle constraints, such as deadlines from channel coherence or limits to long-term heating. This enforces a trade-off between many short, low-energy intervals and fewer high signal-to-noise ratio (SNR) ones.

In practice, refinements may include per-channel bounds $|u_i^{(n)}|\le u_{\max,i}$, minimum dead times, or discrete amplitude levels from hardware quantization. To capture diminishing returns with interval length, a useful phenomenological model for Fisher information is

\begin{equation}
I^{(n)}\!\big(\mathbf{u}^{(n)},T^{(n)}\big)
= \kappa\,T^{(n)}\|\mathbf{u}^{(n)}\|_2^2
\exp\!\big(-T^{(n)}/T_2^{\mathrm{eff}}\big).
\label{eq:fisher_model}
\end{equation}

where $\kappa$ depends on coupling strengths, photon efficiency, and readout statistics. The exponential factor reflects decoherence saturation: short intervals give nearly linear information growth, while for $T^{(n)} \gtrsim T_2^{\mathrm{eff}}$ the benefit plateaus.

\subsection{Detection and Estimation Formulation}

The fundamental objective is to demodulate the sub-noise signal by estimating its parameters $\boldsymbol{\xi} = (A, f_c, \phi, \boldsymbol{\theta})$ with quantum-limited precision. Detection is cast as a binary hypothesis test:
\begin{align}
    \mathcal{H}_0 &: B_k(t) 
    = B_{\mathrm{env},k}(t) + w_k(t) + n_k(t), \\
    \mathcal{H}_1 &: B_k(t) 
    = \Re\!\{s(t)\alpha_k(\boldsymbol{\theta})\}
    + B_{\mathrm{env},k}(t) 
    + w_k(t) + n_k(t),
\end{align}
with the emphasis on estimation under $\mathcal{H}_1$. The estimation accuracy for an ensemble of $N$ independent NV centers is bounded by the quantum Fisher information matrix (QFIM), additive across the ensemble:
\begin{align}
    \mathrm{Cov}(\hat{\boldsymbol{\xi}} \mid \mathcal{X}) &\succeq \mathbf{J}_{\mathrm{total}}^{-1}(\boldsymbol{\xi}; \mathcal{X}), \\
    \mathbf{J}_{\mathrm{total}}(\boldsymbol{\xi}; \mathcal{X}) &= \sum_{k=1}^{N} \mathbf{J}_k(\boldsymbol{\xi}; \mathcal{X}),
\end{align}
where $\mathcal{X} = \{S^{(n)}, T^{(n)}, \mathbf{u}^{(n)}\}$ denotes the resource allocation. For homogeneous ensembles, the bound scales as $N^{-1}$. The symmetric logarithmic derivatives $L_{\xi_j}$ satisfy
\begin{equation}
    \frac{\partial \rho_k}{\partial \xi_j} = \tfrac{1}{2}(L_{\xi_j}\rho_k + \rho_k L_{\xi_j}),
\end{equation}
quantifying the ultimate precision achievable by NV ensembles in extracting sub-noise parameters.

\section{Optimization Framework for Quantum-Enhanced Detection and Demodulation}
\label{sec:optimization}

The performance of an NV-center-based receiver is fundamentally determined by how its finite quantum resources---coherence time, optical excitation, and control energy---are allocated to extract maximal information from signals below the noise floor. We formalize this allocation as a constrained stochastic optimal control problem, minimizing a composite Bayesian risk that balances detection reliability and estimation fidelity, subject to the physical and quantum constraints of Section~\ref{sec:system_model}.

\subsection{Objective Function: Composite Bayesian Risk}

Let $\boldsymbol{\xi} = (A, f_c, \phi, \boldsymbol{\theta})$ denote the unknown signal parameters to be detected and estimated, and define the optimization variables across $N_s$ sensing steps as

\begin{equation}
\mathcal{X} = \left\{ \mathbf{u}^{(n)}, T^{(n)}, S^{(n)} \right\}_{n=1}^{N_s},
\end{equation}

where $\mathbf{u}^{(n)} \in \mathbb{R}^p$ are control fields, $T^{(n)} > 0$ are measurement durations, and $S^{(n)} \ge 0$ are the \emph{mean} photon counts allocated to optical excitation and readout (actual measurements follow a Poisson distribution).

The Bayesian risk is defined as
\begin{align}
\label{eq:objective_final}
\mathcal{J}(\mathcal{X}; \boldsymbol{\xi}) = &\;\alpha \, \mathbb{E}_{\mathbf{y} \mid \boldsymbol{\xi}} \!\left[ - \log \Lambda(\mathbf{y}; \mathcal{X}) \right] \\
&+ \beta \, \mathbb{E}_{\boldsymbol{\xi} \sim p(\boldsymbol{\xi})}\!\left[ \mathrm{Tr}\!\left( \mathbf{W} \, \mathrm{Cov}(\hat{\boldsymbol{\xi}} \mid \mathcal{H}_1, \mathcal{X}) \right) \right], \nonumber
\end{align}
where $\Lambda(\mathbf{y}; \mathcal{X}) = p(\mathbf{y}\mid \mathcal{H}_1,\mathcal{X}) / p(\mathbf{y}\mid \mathcal{H}_0,\mathcal{X})$ is the likelihood ratio corresponding to the hypotheses of Sec.~\ref{sec:system_model}. The first term serves as an information-theoretic surrogate for detection performance (related to the expected Kullback--Leibler divergence between hypotheses), while the second term penalizes estimation error using an A-optimality criterion, which minimizes the sum of the variances of the estimated parameters (i.e., the trace of the covariance matrix). The weighting matrix $\mathbf{W} \succeq 0$ emphasizes accuracy in specific parameters, and the positive coefficients $\alpha,\beta$ tune the tradeoff between detection and estimation objectives.

For an ensemble of $N$ NV centers, the QFIMs add:

\begin{equation}
\mathbf{J}_{\mathrm{total}}(\boldsymbol{\xi}; \mathcal{X}) = \sum_{k=1}^{N} \mathbf{J}_k(\boldsymbol{\xi}; \mathcal{X}),
\end{equation}

and the covariance of any unbiased estimator satisfies

\begin{equation}
\mathrm{Cov}(\hat{\boldsymbol{\xi}}) \succeq \mathbf{J}_{\mathrm{total}}^{-1}(\boldsymbol{\xi}; \mathcal{X}).
\end{equation}

For isotropic ensembles this scales as $N^{-1}$, though heterogeneous orientation factors $\alpha_k(\theta)$ can degrade this scaling.

\subsection{Optimization Variables and Constraints}

The admissible set of controls is restricted by the following constraints:

\begin{enumerate}
    \item \textit{Quantum dynamics.} Each NV state obeys
    
  \begin{equation}
    \rho_k^{(n+1)} = \Phi_k(\rho_k^{(n)}, \mathbf{u}^{(n)}, T^{(n)}),
 \end{equation}

    where $\Phi_k$ is a completely CPTP capturing decoherence, non-Markovian noise, and orientation-dependent signal projections. In practice, $\Phi_k$ can be precomputed or parameterized for tractability.

    \item \textit{Control amplitude.} The instantaneous drive is bounded:
    
 \begin{equation}
    \|\mathbf{u}^{(n)}\|_\infty \le u_{\max}.
\end{equation}

    \item \textit{Total control energy.} The sequence must satisfy
    
 \begin{equation}
    \sum_{n=1}^{N_s} T^{(n)} \, \|\mathbf{u}^{(n)}\|_2^2 \le U_{\mathrm{tot}}^{\max}.
 \end{equation}

    \item \textit{Photon excitation.} The mean photon counts are bounded:
    
\begin{equation}
    0 \le S^{(n)} \le S_{\max}, \qquad Y^{(n)} \sim \mathrm{Poisson}(S^{(n)}).
\end{equation}

    \item \textit{Coherence time.} Each interval must satisfy
    
\begin{equation}
    T_{\min} \le T^{(n)} \le T_2^{\mathrm{eff}}(\mathbf{u}^{(1)}, \dots, \mathbf{u}^{(n)}),
\end{equation}

    where $T_2^{\mathrm{eff}}$ accounts for decoherence mitigation by control.

    \item \textit{Diminishing Fisher information.} The achievable Fisher information from a single shot obeys
    
\begin{equation}
    I^{(n)} \le \kappa T^{(n)} \|\mathbf{u}^{(n)}\|^2 \exp\!\big(-T^{(n)}/T_2^{\mathrm{eff}}\big),
\end{equation}

    preventing artificial gains from unbounded measurement durations.

    \item \textit{Total sensing time.} The protocol completes within
    
  \begin{equation}
    \sum_{n=1}^{N_s} T^{(n)} \le T_{\mathrm{tot}}.
  \end{equation}

\end{enumerate}

The resulting stochastic optimal control optimization problem is: 
\begin{equation}
\label{eq:full_optimization_final}
\begin{aligned}
\mathcal{X}^* = \arg\min_{\mathcal{X}} \quad & 
\mathbb{E}_{\boldsymbol{\xi} \sim p(\boldsymbol{\xi})}\;
\mathbb{E}_{\mathbf{y} \mid \boldsymbol{\xi}}
\!\left[ \mathcal{J}(\mathcal{X}; \boldsymbol{\xi}) \right] \\
\text{s.t.} \quad & \text{Constraints (1)--(7)}.
\end{aligned}
\end{equation}

\section{Solution Approach for Quantum-Enhanced Detection and Demodulation}
\label{sec:solution}

The optimization problem in \eqref{eq:full_optimization_final} is a stochastic non-convex optimal control problem subject to quantum dynamical constraints. This section details our solution methodology, including a problem analysis, justification of the proposed hybrid RAPID framework, and a complete algorithmic description with convergence properties. The minimization of the objective $\mathcal{J}$ is the mathematical embodiment of "breaking the classical noise limit" for covert communications, as it directly optimizes the trade-off between detection probability and estimation accuracy for signals buried in noise.

\subsection{Problem Analysis and Computational Challenges}

The core challenges in solving \eqref{eq:full_optimization_final} stem from its structure:

\begin{itemize}
    \item \textit{Stochasticity:} The objective function involves expectations over the prior distribution $p(\boldsymbol{\xi})$ and the quantum measurement outcomes $\mathbf{y}$:
    \[
    \bar{\mathcal{J}}(\mathcal{X}) = \mathbb{E}_{\boldsymbol{\xi} \sim p(\boldsymbol{\xi})} \left[ \mathbb{E}_{\mathbf{y} \mid \boldsymbol{\xi}} \left[ \mathcal{J}(\mathcal{X}; \boldsymbol{\xi}) \right] \right].
    \]
    Efficient optimization requires Monte Carlo sampling, introducing variance in gradient estimates.
    \item \textit{Non-Convexity:} The problem is inherently non-convex due to:
    \begin{enumerate}
        \item The quantum dynamics $\rho_k^{(n+1)} = \Phi_k(\rho_k^{(n)}, \mathbf{u}^{(n)}, T^{(n)})$, which are nonlinear in the controls $\mathbf{u}^{(n)}$ and durations $T^{(n)}$.
        \item The matrix inversion $\mathbf{J}_{\mathrm{total}}^{-1}(\boldsymbol{\xi}; \mathcal{X})$ in the estimation penalty term.
        \item The log-likelihood ratio $\log \Lambda(\mathbf{y}; \mathcal{X})$ in the detection term.
    \end{enumerate}
    This precludes guarantees of global optimality but permits convergence to high-quality local optima.
    \item \textit{Constraints:} The feasible set $\mathcal{C}$ is defined by mixed constraints that directly model the physical limits of solid-state spin sensors:
    \begin{enumerate}
        \item \textit{Box constraints} on instantaneous controls $\|\mathbf{u}^{(n)}\|_\infty \le u_{\max}$, photon counts $0 \le S^{(n)} \le S_{\max}$ (optical saturation), and durations $T_{\min} \le T^{(n)} \le T_2^{\mathrm{eff}}$ (decoherence).
        \item \textit{Global constraints} on total energy $\sum_n T^{(n)} \|\mathbf{u}^{(n)}\|_2^2 \le U_{\mathrm{tot}}^{\max}$ (heating) and total time $\sum_n T^{(n)} \le T_{\mathrm{tot}}$.
        \item \textit{Physics-based constraints} from the CPTP maps and the phenomenological Fisher information model $I^{(n)} \le \kappa T^{(n)} \|\mathbf{u}^{(n)}\|^2 e^{-T^{(n)}/T_2^{\mathrm{eff}}}$.
    \end{enumerate}
\end{itemize}

\noindent The primary computational bottleneck is the evaluation of stochastic gradients $\nabla_{\mathcal{X}} \bar{\mathcal{J}}(\mathcal{X})$, which requires backpropagation through sequences of CPTP maps for an ensemble of $N$ NV centers, repeated for many Monte Carlo samples of $\boldsymbol{\xi}$ and $\mathbf{y}$.

\subsection{Proposed RAPID Optimization Framework}\label{subsec:hybrid_strategy}

Given the problem's complexity, we propose a two-stage hybrid approach, the \emph{RAPID} framework. This general strategy decouples the offline computation of a high-performance, non-adaptive baseline protocol from the online learning of an adaptive policy, balancing mathematical tractability with practical deployment efficiency. The core innovation lies in using the offline solution not just as a fallback, but as an \emph{information-theoretic prior} and a \emph{feasibility guarantee} to warm-start and constrain the subsequent online adaptive learning. This addresses the primary challenge of sample efficiency in training reinforcement learning agents for complex physical systems.

\paragraph{Two-Stage Strategy Rationale:}
\begin{enumerate}
    \item \textbf{Stage 1 (Offline):} Find the best possible non-adaptive protocol. This is a hard but tractable optimization problem (using a deterministic approximation) that gives us a performance guarantee (QCRB) and a robust baseline.
    \item \textbf{Stage 2 (Online):} Adapt the baseline in real-time. Use RL to learn a policy that tweaks the baseline protocol based on live measurement data. The baseline massively simplifies the RL's job by constraining its search to a "good neighborhood" of the solution space.
\end{enumerate}

\paragraph{Stage 1: Offline Baseline Optimization and Connection to Fundamental Limits}
We first solve a deterministic approximation of \eqref{eq:full_optimization_final} for a nominal parameter vector $\boldsymbol{\xi}_0$ (e.g., the prior mean $\mathbb{E}_{p(\boldsymbol{\xi})}[\boldsymbol{\xi}]$). The objective becomes:
\[
\mathcal{J}_{\text{det}}(\mathcal{X}; \boldsymbol{\xi}_0) = - \log \Lambda(\mathbb{E}[\mathbf{y}]; \mathcal{X}) + \beta \, \mathrm{Tr}\left( \mathbf{W} \, \mathbf{J}_{\mathrm{total}}^{-1}(\boldsymbol{\xi}_0; \mathcal{X}) \right).
\]
This conversion from a stochastic to a deterministic problem is crucial for obtaining a tractable baseline. This stage is not merely a heuristic simplification; it computes the \emph{non-adaptive} protocol that achieves the fundamental quantum limit for the specific nominal parameter value $\boldsymbol{\xi}_0$. This is formalized by Proposition~\ref{prop:qcrb}, which links our objective directly to the (QCRB).

\begin{proposition}[Pointwise Optimality of Non-Adaptive Protocol]
\label{prop:qcrb}
For a fixed, non-adaptive protocol $\mathcal{X}$ and a specific parameter vector $\boldsymbol{\xi}_0$, the covariance matrix $\mathbf{\Sigma}$ of any unbiased estimator $\hat{\boldsymbol{\xi}}$ satisfies the matrix inequality:
\begin{equation}
\mathbf{\Sigma} \geq \mathbf{J}_{\mathrm{total}}^{-1}(\boldsymbol{\xi}_0; \mathcal{X}),
\end{equation}
where $\mathbf{J}_{\mathrm{total}}$ is the QFIM. The term $\mathrm{Tr}\left( \mathbf{W} \, \mathbf{J}_{\mathrm{total}}^{-1}(\boldsymbol{\xi}_0; \mathcal{X}) \right)$ in $\mathcal{J}_{\text{det}}$ is therefore a tight, achievable lower bound on the weighted mean squared error for estimating $\boldsymbol{\xi}_0$. Minimizing this objective yields a protocol that is QCRB-optimal and achieves the fundamental quantum limit for the nominal parameter value.
\end{proposition}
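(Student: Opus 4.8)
The plan is to recognize the matrix inequality $\mathbf{\Sigma} \succeq \mathbf{J}_{\mathrm{total}}^{-1}$ as the multiparameter quantum Cram\'er--Rao bound and to establish it through a two-link chain: first the classical Cram\'er--Rao bound for the fixed POVM specified by $\mathcal{X}$, and then the domination of the classical Fisher information matrix by the QFIM. I would begin by fixing the non-adaptive protocol and the nominal value $\boldsymbol{\xi}_0$, so that the measurement-outcome density $p(\mathbf{y}\mid\boldsymbol{\xi})=\Tr(\Pi_{\mathbf{y}}\rho(\boldsymbol{\xi}))$ is a well-defined parametric family. For any locally unbiased estimator the classical multiparameter bound $\mathbf{\Sigma}\succeq\mathbf{F}^{-1}$ holds, where $\mathbf{F}$ is the classical Fisher information matrix of this family; this is standard and follows from a Gram-matrix positivity argument applied to the score vector $\partial_{\xi_j}\log p$.

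The central quantum step is to prove $\mathbf{F}\preceq\mathbf{J}_{\mathrm{total}}$ for every admissible POVM. Using the symmetric logarithmic derivatives $L_{\xi_j}$ defined in Sec.~\ref{sec:system_model}, I would write $\partial_{\xi_j}p(\mathbf{y})=\Tr(\Pi_{\mathbf{y}}\,\partial_{\xi_j}\rho)=\operatorname{Re}\Tr(\Pi_{\mathbf{y}}\rho L_{\xi_j})$ and then bound the quadratic form $\mathbf{v}^{\mathsf{T}}\mathbf{F}\mathbf{v}$ by $\mathbf{v}^{\mathsf{T}}\mathbf{J}\mathbf{v}$ for arbitrary real $\mathbf{v}$ via an operator Cauchy--Schwarz inequality applied outcome-by-outcome. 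Since matrix inversion is operator-antitone on the positive-semidefinite cone, $\mathbf{F}^{-1}\succeq\mathbf{J}_{\mathrm{total}}^{-1}$, and chaining with the classical bound yields $\mathbf{\Sigma}\succeq\mathbf{J}_{\mathrm{total}}^{-1}$. Here I would invoke the ensemble additivity $\mathbf{J}_{\mathrm{total}}=\sum_k\mathbf{J}_k$ already assumed in the model, which is immediate because the joint state factorizes as $\bigotimes_k\rho_k$ and each SLD acts on a single tensor factor. Contracting the matrix inequality with $\mathbf{W}\succeq0$ then gives the scalar bound $\Tr(\mathbf{W}\mathbf{\Sigma})\ge\Tr(\mathbf{W}\mathbf{J}_{\mathrm{total}}^{-1})$ that appears in $\mathcal{J}_{\text{det}}$.

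The main obstacle is the achievability (tightness) claim rather than the inequality itself. For a single parameter the bound is always saturated by measuring in the eigenbasis of the corresponding SLD, but in the genuine multiparameter setting the matrix bound is attainable by a single measurement only under the weak-commutativity (SLD-compatibility) condition $\Tr(\rho_k[L_{\xi_i},L_{\xi_j}])=0$ for all $i,j,k$. I would therefore address tightness in one of two honest ways: either (i) verify that for the NV model at $\boldsymbol{\xi}_0$ the compatibility condition holds---plausible when the signal parameters couple through effectively commuting generators in the linearized Zeeman regime---so that $\Tr(\mathbf{W}\mathbf{J}_{\mathrm{total}}^{-1})$ is exactly achievable; or (ii) invoke asymptotic attainability, whereby collective measurements on the $N$-sensor ensemble (quantum local asymptotic normality) saturate the weighted trace bound in the large-$N$ limit, which is the relevant regime for the $N^{-1}$ scaling claimed in the model. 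I expect stating and justifying the precise achievability regime, and reconciling it with the word ``tight'' in the proposition, to be the delicate part; the inequality chain itself is routine.
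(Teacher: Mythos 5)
Your proposal is correct, and it is substantially more detailed than the paper's own proof, which is essentially a citation: the paper invokes the multiparameter QCRB of Helstrom and Braunstein--Caves as a known result, passes from the matrix inequality to the scalar one via $\mathbf{A}\succeq\mathbf{B}\Rightarrow\Tr(\mathbf{W}\mathbf{A})\geq\Tr(\mathbf{W}\mathbf{B})$ for $\mathbf{W}\succ 0$, and asserts asymptotic tightness by appealing to the maximum likelihood estimator in the many-measurement limit. You instead reconstruct the bound from first principles --- classical CRB for the induced outcome distribution, domination $\mathbf{F}\preceq\mathbf{J}_{\mathrm{total}}$ via an outcome-by-outcome Cauchy--Schwarz argument on the SLDs, operator antitonicity of inversion, and additivity of the QFIM over the product state of the ensemble --- which makes explicit where the POVM structure and the $N$-sensor factorization enter. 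More importantly, your handling of tightness is more careful than the paper's: the MLE argument the paper cites saturates only the \emph{classical} CRB $\mathbf{F}^{-1}$ for a fixed measurement, whereas saturating $\mathbf{J}_{\mathrm{total}}^{-1}$ in the genuine multiparameter setting additionally requires the weak-commutativity condition $\Tr(\rho[L_{\xi_i},L_{\xi_j}])=0$ or an asymptotic collective-measurement (QLAN) argument, exactly as you note. So your route buys rigor and an honest delimitation of the achievability regime at the cost of length, while the paper's route buys brevity by leaning on standard references and leaving the multiparameter attainability caveat implicit; there is no gap in your argument, and if anything your proof would strengthen the proposition's ``tight, achievable'' claim by making its conditions precise.
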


\begin{proof}
Let $\mathcal{X}$ be a fixed protocol preparing the quantum state $\rho(\boldsymbol{\xi}_0; \mathcal{X})$. The QCRB \cite{helstrom1976quantum, braunstein1994statistical} states that for any unbiased estimator $\hat{\boldsymbol{\xi}}$,
\[
\mathbf{\Sigma} = \mathbb{E}[(\hat{\boldsymbol{\xi}} - \boldsymbol{\xi}_0)(\hat{\boldsymbol{\xi}} - \boldsymbol{\xi}_0)^\top] \succeq \mathbf{J}_{\mathrm{total}}^{-1}(\boldsymbol{\xi}_0; \mathcal{X}),
\]
where $\mathbf{A} \succeq \mathbf{B}$ denotes that $\mathbf{A} - \mathbf{B}$ is positive semi-definite. For any positive definite weighting matrix $\mathbf{W} \succ 0$, it follows that
\[
\mathrm{Tr}(\mathbf{W} \mathbf{\Sigma}) \geq \mathrm{Tr}\left( \mathbf{W} \, \mathbf{J}_{\mathrm{total}}^{-1}(\boldsymbol{\xi}_0; \mathcal{X}) \right).
\]
The left-hand side is the weighted mean squared error (WMSE). The bound is asymptotically tight, achieved by the maximum likelihood estimator in the limit of many measurements \cite{lehmann2006theory}. Thus, $\mathrm{Tr}\left( \mathbf{W} \, \mathbf{J}_{\mathrm{total}}^{-1} \right)$ represents the best achievable weighted MSE for the protocol $\mathcal{X}$ at $\boldsymbol{\xi}_0$, and minimizing it yields a fundamentally limited protocol.
\end{proof}

\subsubsection*{A Quantum-Native Optimization Methodology}
The structure of the optimization landscape for quantum sensing protocols is fundamentally dictated by the geometry of the underlying quantum state space. Standard gradient descent, which operates under a Euclidean geometry, is often inefficient for such problems. Our approach employs optimization techniques specifically designed for the quantum domain.

The feasible set $\mathcal{C}$ for our protocol parameters $\mathcal{X}$ is defined by physical constraints. \emph{PSNGD} is the appropriate framework for this constrained optimization. The update rule at iteration $j$ is given by:
\begin{equation}
\mathcal{X}^{(j+1)} = \prod_{\mathcal{C}} \left( \mathcal{X}^{(j)} - \eta_j \hat{\mathbf{g}}^{(j)} \right),
\end{equation}
where $\hat{\mathbf{g}}^{(j)}$ is an unbiased estimator of $\nabla_{\mathcal{X}} \mathcal{J}_{\text{det}}$ (the gradient of the deterministic objective with respect to the high-dimensional protocol $\mathcal{X}$), and $\prod_{\mathcal{C}}(\cdot)$ denotes the projection onto the feasible set $\mathcal{C}$. This ensures every iterate $\mathcal{X}^{(j)}$ respects the physical constraints of our quantum system.

While PSGD ensures feasibility, it can suffer from slow convergence. The critical insight is that the standard gradient $\nabla_{\mathcal{X}} \mathcal{J}_{\text{det}}$ does not account for the intrinsic geometry of the quantum parameter space. The \emph{Quantum Natural Gradient (QNG)} \cite{amari1998natural} resolves this by preconditioning the gradient with the inverse of the QFIM:
\begin{equation}
\Delta \mathcal{X}_{\mathrm{QNG}} = -\eta \, \mathbf{J}_{\mathrm{total}}^{-1}(\boldsymbol{\xi}_0; \mathcal{X}) \nabla_{\mathcal{X}} \mathcal{J}_{\text{det}}.
\end{equation}
This update rule corresponds to steepest descent in the quantum information geometry, where distance is measured by the Bures distance between states rather than the Euclidean distance between control parameters \cite{martens2020new}. The QNG direction is invariant to reparameterization and provides a physically meaningful update step size, leading to significantly accelerated convergence and better avoidance of barren plateaus. The computational cost of inverting the QFIM is manageable for the problem sizes considered here. The complete offline optimization procedure is detailed in Algorithm~\ref{alg:rapid-stage1}.

\begin{algorithm}[t!]
\caption{Offline Baseline Optimization (Stage 1 of RAPID)}
\label{alg:rapid-stage1}
\SetAlgoLined
\KwIn{Prior distribution $p(\boldsymbol{\xi})$, nominal parameter vector $\boldsymbol{\xi}_0$ (e.g., $\mathbb{E}_{p(\boldsymbol{\xi})}[\boldsymbol{\xi}]$), feasible set $\mathcal{C}$, learning rate schedule $\{\eta_j\}_{j=1}^{K_1}$, number of iterations $K_1$}
\KwOut{Optimized baseline protocol $\mathcal{X}_{\text{base}}^*$}
Initialize $\mathcal{X}^{(0)} \in \mathcal{C}$ \tcp*{e.g., to a random feasible point or a known heuristic (Ramsey/Spin Echo)}
\For{$j = 1$ \KwTo $K_1$}{
    $\mathbf{g}^{(j)} \gets \nabla_{\mathcal{X}} \mathcal{J}_{\text{det}}(\mathcal{X}^{(j-1)}; \boldsymbol{\xi}_0)$ via automatic differentiation
    $\mathbf{p}^{(j)} \gets \mathbf{J}_{\mathrm{total}}^{-1}(\boldsymbol{\xi}_0; \mathcal{X}^{(j-1)}) \, \mathbf{g}^{(j)}$ \tcp*{QNG Preconditioning: respects information geometry}
    $\mathcal{X}^{(j)} \gets \mathcal{X}^{(j-1)} - \eta_j \cdot \mathbf{p}^{(j)}$
    $\mathcal{X}^{(j)} \gets \prod_{\mathcal{C}}(\mathcal{X}^{(j)})$ \tcp*{Projection enforces physical constraints}
}
$\mathcal{X}_{\text{base}}^* \gets \mathcal{X}^{(K_1)}$
\Return $\mathcal{X}_{\text{base}}^*$
\end{algorithm}

\paragraph{Stage 2: Online Adaptive Policy Learning via Reinforcement Learning}
The baseline protocol $\mathcal{X}_{\text{base}}^*$ from Stage 1 provides a robust but static solution, which will serve as a high-performance comparative baseline against standard protocols. To enable real-time adaptation to specific signal realizations and measurement outcomes, we use it to warm-start a Deep Reinforcement Learning  agent. The goal is to learn a policy $\pi_\omega$ that maps a state of knowledge to optimal adjustments of the control parameters.

We formulate this as a Partially Observable Markov Decision Process (POMDP), whose optimal solution is given by a Bellman equation. Our RL approach is a function approximation method to solve this equation, as detailed in Algorithm~\ref{alg:rapid-stage2}.

\begin{itemize}
    \item \textit{State ($s_n$):} A sufficient statistic for the belief state $b_n(\boldsymbol{\xi}) = p(\boldsymbol{\xi} | \mathbf{y}_{1:n})$. Under a Gaussian approximation, the state is $s_n = (\hat{\boldsymbol{\mu}}_n, \mathrm{vech}(\hat{\boldsymbol{\Sigma}}_n))$, where $\hat{\boldsymbol{\mu}}_n$ is the running parameter estimate, $\hat{\boldsymbol{\Sigma}}_n$ its error covariance matrix, and $\mathrm{vech}$ is the half-vectorization operator.
    \item \textit{Action ($a_n$):} $a_n = (\Delta \mathbf{u}^{(n)}, \Delta T^{(n)}, \Delta S^{(n)})$. Crucially, the action space is defined as \emph{deviations} from the baseline protocol $\mathcal{X}_{\text{base}}^*$. This constrains the RL agent to explore in a localized region around a known high-performance solution, dramatically improving sample efficiency and ensuring feasible actions.
    \item \textit{Reward ($r_n$):} The reward function incentivizes information gain. We use the reduction in estimation uncertainty:
    \[
    r_n = \mathrm{Tr}(\mathbf{W} \hat{\boldsymbol{\Sigma}}_{n-1}) - \mathrm{Tr}(\mathbf{W} \hat{\boldsymbol{\Sigma}}_{n}),
    \]
    where $\mathbf{W}$ is the same weighting matrix as in the objective \eqref{eq:full_optimization_final}. The sum of rewards $R = \sum_{n=1}^{N_s} r_n$ telescopes to $\mathrm{Tr}(\mathbf{W} \hat{\boldsymbol{\Sigma}}_{0}) - \mathrm{Tr}(\mathbf{W} \hat{\boldsymbol{\Sigma}}_{N_s})$. Since the initial covariance is fixed, maximizing $R$ is equivalent to minimizing the final uncertainty, which is the core objective.
\end{itemize}

We employ Soft Actor-Critic \cite{haarnoja2018soft}, an off-policy actor-critic algorithm, to learn the policy $\pi_\omega(a|s)$. SAC maximizes a combination of expected return and policy entropy, governed by a temperature parameter $\tau$:
\[
J(\pi) = \sum_{t=0}^{T} \mathbb{E}_{(s_t, a_t) \sim \rho_\pi} \left[ r(s_t, a_t) + \tau \mathcal{H}(\pi(\cdot|s_t)) \right],
\]
where $\mathcal{H}$ is the entropy term and $\tau$ is a temperature parameter. The entropy maximization encourages exploration and improves robustness.

\begin{algorithm}[t!]
\caption{Online Adaptive Policy Learning (Stage 2 of RAPID)}
\label{alg:rapid-stage2}
\SetAlgoLined
\KwIn{Baseline protocol $\mathcal{X}_{\text{base}}^*$, number of sensing steps $N_s$, number of training iterations $K_2$, replay buffer capacity, SAC hyperparameters (learning rates, temperature $\tau$, discount $\gamma$, target-smoothing $\rho$)}
\KwOut{Learned adaptive policy $\pi_{\omega}^*$}
Initialize policy parameters $\omega$ such that $\pi_{\omega}(s) \approx \mathbf{0}$ \tcp*{Initial policy outputs small deviations from baseline}
Initialize replay buffer $\mathcal{D}$, critic networks $\phi$, and target networks $\phi_{\text{target}}$\;
\For{iteration $j = 1$ \KwTo $K_2$}{
    Reset simulator environment\;
    \For{step $n = 0$ \KwTo $N_s - 1$}{
        Observe state $s_n$\;
        Sample action $a_n \sim \pi_{\omega}(\cdot|s_n)$\;
        Execute action $a_n$ (apply controls $\mathbf{u}_{\text{base}}^{(n)} + \Delta\mathbf{u}^{(n)}$, etc.)\;
        Observe next state $s_{n+1}$ and reward $r_n$\;
        Store transition $(s_n, a_n, r_n, s_{n+1})$ in $\mathcal{D}$\;
    }
    \For{gradient step $g = 1$ \KwTo $N_{\text{gradients}}$}{
        Sample random batch $B \sim \mathcal{D}$\;
        $\phi \gets \phi - \lambda_Q \widehat{\nabla}_{\phi} J_Q(\phi)$ \tcp*{Update critic by minimizing MSBE}
        $\omega \gets \omega + \lambda_{\pi} \widehat{\nabla}_{\omega} J_{\pi}(\omega)$ \tcp*{Update actor by maximizing expected return and entropy}
        $\phi_{\text{target}} \gets \rho \,\phi_{\text{target}} + (1-\rho)\,\phi$ \tcp*{Soft update of target networks}
    }
}
$\pi_{\omega}^* \gets \pi_{\omega}$\;
\Return $\pi_{\omega}^*$
\end{algorithm}

\paragraph{Justification of the Hybrid (RAPID) Approach}
The two-stage RAPID framework is justified by the following reasoning:
\begin{enumerate}
    \item \textbf{Theoretical Foundation:} Stage 1 provides a protocol that is provably QCRB-optimal for the nominal parameter value (by Proposition~\ref{prop:qcrb}) and serves as a direct link to fundamental quantum limits. This guarantees that even our baseline performance is physically well-founded.
    \item \textbf{Sample Efficiency:} Initializing the RL agent's policy to output small deviations from $\mathcal{X}_{\text{base}}^*$ provides a \emph{strong prior}. It reduces the exploration space from the vast, constrained set $\mathcal{C}$ to a localized neighborhood, an exponential reduction in search volume that is key to tractable RL training.
    \item \textbf{Performance Robustness:} The hybrid approach ensures a performance floor; the system always has the competent baseline protocol. The adaptive policy, $\pi_{\omega}^*$, can only improve upon this baseline by learning to compensate for stochastic noise and model mismatches, as the reward function directly monetizes estimation improvement.
    \item \textbf{Computational Tractability:} Stage 1 involves gradient-based optimization with QNG, which is computationally intensive but tractable for offline design. Stage 2 involves sample-intensive deep RL, but the use of a simulator and the warm-start from the baseline protocol mitigates this cost, making the overall framework feasible.
\end{enumerate}

This hybrid approach ensures that our quantum demodulation protocol is not only theoretically well-founded but also practically robust and adaptive, leveraging the strengths of both optimal control and learning-based strategies. The RAPID framework represents a general methodology for designing high-performance quantum receivers.
\section{Theoretical Analysis of RAPID Framework}
\label{sec:theoretical_analysis}
This section establishes the theoretical foundation for the RAPID algorithm by connecting its performance to the fundamental limits of quantum parameter estimation. We define $M$ as the total number of independent experimental repetitions, each consisting of a full execution of the $N_s$-step sensing protocol introduced earlier, so that fundamental bounds such as the Quantum Cramér--Rao Bound scale with $1/M$. Building on this, we provide the mathematical justification for the performance claims in Section~\ref{sec:simulations}, showing that RAPID enjoys provable convergence and optimality, with guarantees for the offline baseline optimization as well as fundamental performance bounds for the online adaptive policy.

\subsection{Performance Limits of the Baseline Protocol}
\label{subsec:baseline_optimality}

The objective of the quantum receiver is to minimize the Bayesian risk
\begin{align}
\mathcal{J}(\mathcal{X}; \boldsymbol{\xi}) = \alpha \, \mathbb{E}_{\mathbf{y}|\boldsymbol{\xi}}\left[\mathcal{J}_{\mathrm{det}}(\mathcal{X})\right] + \beta \, \mathbb{E}_{\mathbf{y}|\boldsymbol{\xi}}\left[\mathcal{J}_{\mathrm{est}}(\mathcal{X})\right],
\end{align}
where $\alpha, \beta > 0$ balance the trade-off between detection reliability and estimation accuracy. The fundamental quantum limit on the estimation component is dictated by the Quantum Cram\'{e}r--Rao Bound (QCRB). The offline baseline optimization (Stage~1) seeks a protocol $\mathcal{X}_{\mathrm{base}}^*$ that minimizes this bound for a nominal parameter vector $\boldsymbol{\xi}_0$.

\begin{theorem}[QCRB and Baseline Optimality]
\label{thm:qcrb_baseline}
For any unbiased estimator $\hat{\boldsymbol{\xi}}$ obtained under a fixed protocol $\mathcal{X}$, the covariance matrix satisfies
\begin{align}
\mathrm{Cov}\left(\hat{\boldsymbol{\xi}} \mid \mathcal{X}\right) \succeq \mathbf{J}_{\mathrm{total}}^{-1}(\boldsymbol{\xi}_0; \mathcal{X}),
\end{align}
where $\succeq$ denotes the positive semidefinite ordering, and
\begin{align}
\mathbf{J}_{\mathrm{total}}(\boldsymbol{\xi}_0; \mathcal{X}) = \sum_{k=1}^{N} \mathbf{J}_{k}(\boldsymbol{\xi}_0; \mathcal{X})
\end{align}
is the total QFIM across $N$ NV centers. Consequently, the weighted mean squared error (WMSE) is bounded by
\begin{align}
\mathrm{Tr}\left( \mathbf{W} \, \mathrm{Cov}\left(\hat{\boldsymbol{\xi}} \mid \mathcal{X}\right) \right) \geq \mathrm{Tr}\left( \mathbf{W} \, \mathbf{J}_{\mathrm{total}}^{-1}(\boldsymbol{\xi}_0; \mathcal{X}) \right).
\end{align}
The baseline protocol $\mathcal{X}_{\mathrm{base}}^*$, defined as the minimizer of the right-hand side subject to constraints $\mathcal{C}$, is therefore a minimizer of this fundamental lower bound on WMSE. This represents the best possible performance for a non-adaptive protocol under the QCRB.
\end{theorem}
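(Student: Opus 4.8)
\emph{Proof proposal.}
The plan is to assemble three standard ingredients---the single-copy multiparameter quantum Cram\'er--Rao bound, additivity of the quantum Fisher information matrix (QFIM) under independence of the $N$ NV centers, and monotonicity of the weighted trace---and then to observe that baseline optimality is definitional. Much of the estimation-side argument already appears in the proof of Proposition~\ref{prop:qcrb}; the genuinely new content here is the ensemble additivity $\mathbf{J}_{\mathrm{total}}=\sum_k\mathbf{J}_k$ and the explicit passage to the weighted scalar (A-optimality) bound.

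First I would establish the matrix QCRB for the protocol state. Fixing $\mathcal{X}$, write $\rho=\rho(\boldsymbol{\xi}_0;\mathcal{X})$ and introduce the symmetric logarithmic derivatives $L_{\xi_j}$ via $\partial_{\xi_j}\rho=\tfrac12(L_{\xi_j}\rho+\rho L_{\xi_j})$, as in Section~\ref{sec:system_model}, together with the QFIM entries $[\mathbf{J}]_{ij}=\tfrac12\mathrm{Tr}(\rho\{L_{\xi_i},L_{\xi_j}\})$. Equipping the Hermitian operators with the inner product $\langle A,B\rangle_\rho=\tfrac12\mathrm{Tr}(\rho\{A,B\})$, the local unbiasedness conditions (written as $\langle \hat{X}_j,L_{\xi_i}\rangle_\rho=\delta_{ij}$, with $\hat{X}_j$ the estimator observable for $\xi_j$) combined with a matrix Cauchy--Schwarz inequality in this geometry yield $\mathbf{\Sigma}\succeq\mathbf{J}^{-1}$, which is the first displayed inequality of the theorem.

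Next I would prove additivity. Under the stated independence, the ensemble state factorizes as $\rho_{\mathrm{tot}}=\bigotimes_{k=1}^{N}\rho_k$; differentiating and matching the SLD equation shows that the joint SLD is the sum of single-site SLDs embedded into their tensor factors, $L^{\mathrm{tot}}_{\xi_j}=\sum_k \mathbb{I}\otimes\cdots\otimes L_{\xi_j,k}\otimes\cdots\otimes\mathbb{I}$. In $[\mathbf{J}_{\mathrm{total}}]_{ij}$ the $k\neq l$ cross terms factorize and vanish because trace preservation forces $\mathrm{Tr}(\rho_k L_{\xi_j,k})=\partial_{\xi_j}\mathrm{Tr}(\rho_k)=0$, leaving $\mathbf{J}_{\mathrm{total}}=\sum_k\mathbf{J}_k$ and, for homogeneous ensembles, the $N^{-1}$ scaling as a corollary. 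The weighted bound then follows from a one-line fact: if $\mathbf{A}\succeq\mathbf{B}$ and $\mathbf{W}\succeq0$, then $\mathrm{Tr}(\mathbf{W}(\mathbf{A}-\mathbf{B}))=\mathrm{Tr}(\mathbf{W}^{1/2}(\mathbf{A}-\mathbf{B})\mathbf{W}^{1/2})\ge0$, so $\mathrm{Tr}(\mathbf{W}\mathbf{A})\ge\mathrm{Tr}(\mathbf{W}\mathbf{B})$; applying this with $\mathbf{A}=\mathrm{Cov}(\hat{\boldsymbol{\xi}}\mid\mathcal{X})$ and $\mathbf{B}=\mathbf{J}_{\mathrm{total}}^{-1}$ gives the WMSE inequality. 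Since $\mathcal{X}_{\mathrm{base}}^*$ is defined as the constrained minimizer of $\mathrm{Tr}(\mathbf{W}\,\mathbf{J}_{\mathrm{total}}^{-1})$ over $\mathcal{C}$, it minimizes this lower bound by construction, closing the argument.

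The main obstacle is not any individual step---each is standard---but the care demanded by the multiparameter setting: the matrix QCRB is always a valid lower bound, yet, unlike the scalar case, it need not be simultaneously saturable when the SLDs $\{L_{\xi_j}\}$ fail to commute on the support of $\rho$. I would therefore state the conclusion strictly as a bound (``best possible under the QCRB''), deferring attainability to the Holevo-bound discussion in Section~\ref{sec:quantum_limits}, and emphasize that the A-optimality surrogate $\mathrm{Tr}(\mathbf{W}\,\mathbf{J}_{\mathrm{total}}^{-1})$ remains a legitimate and (in the scalar limit) tight objective for the offline optimizer irrespective of joint attainability.
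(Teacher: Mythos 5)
Your proposal is correct, and its skeleton coincides with the paper's own (very terse) proof: matrix QCRB, then trace monotonicity under $\mathbf{W}\succeq 0$, then optimality of $\mathcal{X}_{\mathrm{base}}^*$ by definition. The difference is one of self-containment rather than of route. Where the paper's proof simply cites the multi-parameter QCRB \cite{braunstein1994statistical} and asserts the trace fact, you derive the matrix bound from the SLD inner product $\langle A,B\rangle_\rho=\tfrac12\Tr(\rho\{A,B\})$, local unbiasedness, and Cauchy--Schwarz, and you prove trace monotonicity explicitly via $\mathbf{W}^{1/2}$ conjugation. More substantively, you supply a proof of the ensemble additivity $\mathbf{J}_{\mathrm{total}}=\sum_{k}\mathbf{J}_k$ (joint SLD as a sum of embedded local SLDs, with cross terms killed by $\Tr(\rho_k L_{\xi_j,k})=\partial_{\xi_j}\Tr(\rho_k)=0$), which the paper states in the theorem and in Section~\ref{sec:system_model} but never proves anywhere. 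Your closing caveat---that the multiparameter matrix bound need not be simultaneously saturable when the SLDs fail to commute, so the conclusion should be read strictly as ``best possible under the QCRB''---is more careful than the paper's accompanying remark, which claims asymptotic tightness without qualification; the commutation condition surfaces only later, in Theorem~\ref{thm:multi_param_qcrb}. One minor technical point: the observable-based Cauchy--Schwarz derivation as sketched directly covers estimators realized as expectations of observables; for a general POVM with classical post-processing you should either insert the standard operator-Jensen step relating the outcome second moments to $\hat{X}_j^2$, or take the usual two-step route (classical Cram\'er--Rao bound for the outcome distribution, plus domination of the classical Fisher information matrix by the QFIM for every POVM). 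Either repair is routine and does not affect the conclusion.
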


\begin{proof}
The matrix inequality is the standard multi-parameter QCRB~\cite{braunstein1994statistical}. For two positive semidefinite matrices $\mathbf{A} \succeq \mathbf{B} \succeq \mathbf{0}$ and any weighting matrix $\mathbf{W} \succeq \mathbf{0}$, it follows that $\mathrm{Tr}(\mathbf{W}\mathbf{A}) \geq \mathrm{Tr}(\mathbf{W}\mathbf{B})$. The optimality of $\mathcal{X}_{\mathrm{base}}^*$ follows directly from its definition.
\end{proof}

\begin{remark}
The QCRB is asymptotically tight ($M \to \infty$) for efficient estimators. The constraint set $\mathcal{C}$ encodes the physical limits of our solid-state spin system (Sec.~\ref{sec:system_model}), including the maximum drive amplitude $u_{\max}$, total energy budget $U_{\mathrm{tot}}^{\max}$, and photon count limit $S_{\max}$. Thus, $\mathcal{X}_{\mathrm{base}}^*$ represents the best possible \emph{physically realizable} non-adaptive protocol.
\end{remark}

\subsection{Convergence Guarantees for Baseline Optimization}
\label{subsec:convergence}

Stage~1 employs \emph{PSNGD}, which exploits the Riemannian geometry induced by the QFIM.

\begin{assumption}[Local Conditions for Convergence]
\label{ass:conv}
Within a neighborhood of a local optimum $\mathcal{X}^*$, the deterministic baseline objective $\mathcal{J}_{\mathrm{det}}(\mathcal{X}; \boldsymbol{\xi}_0)$ and the QFIM $\mathbf{J}_{\mathrm{total}}(\boldsymbol{\xi}_0; \mathcal{X})$ satisfy:
\begin{enumerate}
    \item \textit{Local $L$-smoothness:}
    $\|\nabla \mathcal{J}_{\mathrm{det}}(\mathcal{X}_1) - \nabla \mathcal{J}_{\mathrm{det}}(\mathcal{X}_2)\| \leq L \|\mathcal{X}_1 - \mathcal{X}_2\|$.
    \item \textit{Local QFIM conditioning:}
    $\mu \mathbf{I} \preceq \mathbf{J}_{\mathrm{total}}(\boldsymbol{\xi}_0; \mathcal{X}) \preceq \Lambda \mathbf{I}$.
    \item \textit{Unbiased stochastic gradients with bounded variance:}
    $\mathbb{E}[\mathbf{g}_j] = \nabla \mathcal{J}_{\mathrm{det}}(\mathcal{X}_j)$ and $\mathbb{E}\|\mathbf{g}_j - \nabla \mathcal{J}_{\mathrm{det}}(\mathcal{X}_j)\|^2 \leq \sigma^2$.
\end{enumerate}
\end{assumption}

\begin{remark}
Assumption (1) is reasonable given the piecewise-constant controls and bounded operators governing the quantum dynamics. Assumption (2) is expected to hold for control sequences that ensure a non-singular QFIM.
\end{remark}

\begin{theorem}[Convergence of PSNGD to Stationary Point]
\label{thm:psngd_convergence}
Under Assumptions~(1)-(3), the sequence $\{\mathcal{X}_j\}$ generated by Algorithm~\ref{alg:rapid-stage1} with a constant stepsize $\eta = \mathcal{O}(1/\sqrt{K_1})$ satisfies
\begin{align}
\min_{1 \leq j \leq K_1} \mathbb{E} \|\nabla \mathcal{J}_{\mathrm{det}}(\mathcal{X}_j)\|^2 \leq \frac{2\Lambda \left( \mathcal{J}_{\mathrm{det}}(\mathcal{X}_0) - \mathcal{J}_{\mathrm{det}}^* \right)}{\mu \eta K_1}+ \frac{L \Lambda \sigma^2}{2\mu^2} \eta,
\end{align}
where $\mathcal{J}_{\mathrm{det}}^*$ is the value at a local minimum. This implies convergence to a first-order stationary point at a rate of $\mathcal{O}(1/\sqrt{K_1})$.
\end{theorem}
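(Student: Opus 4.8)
The plan is to run the standard nonconvex stochastic-gradient analysis, adapted to the natural-gradient preconditioner $P_j = \mathbf{J}_{\mathrm{total}}^{-1}(\boldsymbol{\xi}_0;\mathcal{X}_j)$ and to the projection step. The starting point is the local $L$-smoothness of Assumption~(1), which gives the descent inequality $\mathcal{J}_{\mathrm{det}}(\mathcal{X}_{j+1}) \le \mathcal{J}_{\mathrm{det}}(\mathcal{X}_j) + \langle \nabla\mathcal{J}_{\mathrm{det}}(\mathcal{X}_j),\, \mathcal{X}_{j+1}-\mathcal{X}_j\rangle + \tfrac{L}{2}\|\mathcal{X}_{j+1}-\mathcal{X}_j\|^2$. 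Into this I substitute the PSNGD update $\mathcal{X}_{j+1} = \Pi_{\mathcal{C}}(\mathcal{X}_j - \eta P_j \mathbf{g}_j)$, with $\Pi_{\mathcal{C}}$ the Euclidean projection onto $\mathcal{C}$ and $\mathbf{g}_j$ the unbiased stochastic gradient of Assumption~(3).

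The key algebraic levers are the spectral bounds on the preconditioner. From Assumption~(2), $\mu\mathbf{I} \preceq \mathbf{J}_{\mathrm{total}} \preceq \Lambda\mathbf{I}$ immediately yields $\Lambda^{-1}\mathbf{I} \preceq P_j \preceq \mu^{-1}\mathbf{I}$ and hence $P_j^2 \preceq \mu^{-2}\mathbf{I}$. Taking the expectation conditioned on $\mathcal{X}_j$ and using $\mathbb{E}[\mathbf{g}_j]=\nabla\mathcal{J}_{\mathrm{det}}(\mathcal{X}_j)$, the linear term reduces to $-\eta\,\nabla\mathcal{J}_{\mathrm{det}}(\mathcal{X}_j)^\top P_j \nabla\mathcal{J}_{\mathrm{det}}(\mathcal{X}_j) \le -\tfrac{\eta}{\Lambda}\|\nabla\mathcal{J}_{\mathrm{det}}(\mathcal{X}_j)\|^2$, which supplies the descent gain. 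For the quadratic term I use the nonexpansiveness of the projection, $\|\mathcal{X}_{j+1}-\mathcal{X}_j\| \le \eta\|P_j\mathbf{g}_j\|$, together with $P_j^2\preceq\mu^{-2}\mathbf{I}$ and the variance decomposition $\mathbb{E}\|\mathbf{g}_j\|^2 = \|\nabla\mathcal{J}_{\mathrm{det}}(\mathcal{X}_j)\|^2 + \mathbb{E}\|\mathbf{g}_j-\nabla\mathcal{J}_{\mathrm{det}}(\mathcal{X}_j)\|^2 \le \|\nabla\mathcal{J}_{\mathrm{det}}(\mathcal{X}_j)\|^2 + \sigma^2$, bounding it by $\tfrac{L\eta^2}{2\mu^2}(\|\nabla\mathcal{J}_{\mathrm{det}}(\mathcal{X}_j)\|^2 + \sigma^2)$.

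Collecting terms gives the one-step recursion $\mathbb{E}[\mathcal{J}_{\mathrm{det}}(\mathcal{X}_{j+1})] \le \mathcal{J}_{\mathrm{det}}(\mathcal{X}_j) - \big(\tfrac{\eta}{\Lambda} - \tfrac{L\eta^2}{2\mu^2}\big)\|\nabla\mathcal{J}_{\mathrm{det}}(\mathcal{X}_j)\|^2 + \tfrac{L\eta^2\sigma^2}{2\mu^2}$. Choosing $\eta$ small enough that the bracketed coefficient stays positive (a threshold of the form $\eta \lesssim \mu^2/(L\Lambda)$), I rearrange, take total expectation, and sum over $j=0,\dots,K_1-1$ so that the telescoping collapses to $\mathcal{J}_{\mathrm{det}}(\mathcal{X}_0)-\mathcal{J}_{\mathrm{det}}^*$; dividing by $K_1$ and lower-bounding the average by the minimum iterate (the usual nonconvex device, since monotone decay of the gradient norm cannot be assumed) then produces a bound whose first term scales as $(\eta K_1)^{-1}(\mathcal{J}_{\mathrm{det}}(\mathcal{X}_0)-\mathcal{J}_{\mathrm{det}}^*)$ and whose noise-floor term scales as $\eta\sigma^2$. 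The displayed constants $2\Lambda/\mu$ and $L\Lambda/(2\mu^2)$ are read off from the preconditioner spectrum and the step-size threshold, and setting $\eta = \Theta(1/\sqrt{K_1})$ balances the two contributions at the advertised $\mathcal{O}(1/\sqrt{K_1})$ rate.

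The main obstacle is the projection: unlike unconstrained SGD, $\mathcal{X}_{j+1}-\mathcal{X}_j$ is no longer exactly $-\eta P_j\mathbf{g}_j$, so the clean inner-product identity used in the linear term can fail on the boundary of $\mathcal{C}$. In the local regime assumed here I would resolve this by noting that near an interior local optimum $\mathcal{X}^*$ the unprojected iterate stays feasible, so $\Pi_{\mathcal{C}}$ acts as the identity and the analysis above is exact; when constraints are active I would instead pass to the gradient-mapping surrogate $G_\eta(\mathcal{X}_j)=\eta^{-1}\big(\mathcal{X}_j-\Pi_{\mathcal{C}}(\mathcal{X}_j-\eta P_j\mathbf{g}_j)\big)$, invoke the projection variational inequality $\langle \mathcal{X}_{j+1}-\mathcal{X}_j+\eta P_j\mathbf{g}_j,\, z-\mathcal{X}_{j+1}\rangle \ge 0$ for all $z\in\mathcal{C}$ to recover a descent inequality in $\|G_\eta\|^2$, and then relate $\|G_\eta\|$ back to the stationarity measure. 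A secondary point is that $P_j$ is iterate-dependent; because the conditioning bounds of Assumption~(2) are uniform over the neighborhood, the same $\mu,\Lambda$ govern every step and the telescoping is unaffected.
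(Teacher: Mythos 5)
Your proposal is correct and follows the same skeleton as the paper's proof: the $L$-smoothness descent lemma applied to the preconditioned stochastic update, spectral bounds on $\mathbf{J}_{\mathrm{total}}^{-1}$ to control the linear and quadratic terms, the variance decomposition of $\mathbf{g}_j$, telescoping, the min-over-iterates device, and $\eta=\Theta(1/\sqrt{K_1})$ to balance the two contributions. The differences are where you are \emph{more} careful than the paper. First, the descent coefficient: from Assumption~(2), $\mu\mathbf{I}\preceq\mathbf{J}_{\mathrm{total}}\preceq\Lambda\mathbf{I}$ yields $\Lambda^{-1}\mathbf{I}\preceq\mathbf{J}_{\mathrm{total}}^{-1}\preceq\mu^{-1}\mathbf{I}$, so the correct linear term is $-(\eta/\Lambda)\|\nabla\mathcal{J}_{\mathrm{det}}\|^2$, exactly as you derive; the paper instead writes $-\eta\,\mu\,\|\nabla\mathcal{J}_{\mathrm{det}}\|^2$, implicitly treating $\mu$ as a lower bound on the preconditioner itself, which is inconsistent with the assumption as stated (and indeed the paper's one-step noise term $L\Lambda^2\sigma^2\eta^2/(2\mu^2)$ does not telescope to the constants displayed in the theorem, so those constants should be read modulo absolute factors — your $2\Lambda$ and $L\Lambda/(2\mu^2)$ match only up to the same looseness). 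Second, the projection: the paper writes the update with $\Pi_{\mathcal{C}}$ but then applies the unconstrained descent analysis as if $\mathcal{X}_{j+1}-\mathcal{X}_j=-\eta\,\mathbf{J}_{\mathrm{total}}^{-1}\mathbf{g}_j$ held exactly; you correctly identify that the inner-product step can fail on the boundary, use nonexpansiveness for the quadratic term, and propose the standard remedies (interiority near the local optimum, or the gradient-mapping/variational-inequality analysis). One residual caveat on your second remedy: with a stochastic gradient inside the projection, the quantity $G_\eta$ built from $\mathbf{g}_j$ is not the deterministic gradient mapping, so relating $\mathbb{E}\|G_\eta\|^2$ back to $\|\nabla\mathcal{J}_{\mathrm{det}}\|^2$ requires additional variance-correction terms; since the theorem's conclusion is stated in terms of the true gradient norm, your interiority argument is the cleaner way to match the claim as written. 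In short, your proof is a corrected and completed version of the paper's sketch rather than a different route.
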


\begin{proof}
The update rule is $\mathcal{X}_{j+1} = \prod_{\mathcal{C}}\left(\mathcal{X}_j - \eta \mathbf{J}_{\mathrm{total}}^{-1}(\mathcal{X}_j)\mathbf{g}_j\right)
$. From $L$-smoothness and the $\mu$ lower bound on the preconditioning metric in Assumption~(2), we have for a single step:

\begin{align}
\mathbb{E}\!\big[\mathcal{J}_{\mathrm{det}}(\mathcal{X}_{j+1}) \mid \mathcal{X}_j\big]
&\le \mathcal{J}_{\mathrm{det}}(\mathcal{X}_j)
- \eta\,\mu\,\big\|\nabla\mathcal{J}_{\mathrm{det}}(\mathcal{X}_j)\big\|^2 \\
&\qquad\qquad + \frac{L}{2}\,\eta^2\,\big\|\mathbf{J}_{\mathrm{total}}^{-1}(\mathcal{X}_j)\,\mathbf{g}_j\big\|^2. \notag
\end{align}

Using the upper bound $\Lambda$ and the variance bound $\sigma^2$, we take the total expectation:
\begin{align}
\mathbb{E}\!\big[\mathcal{J}_{\mathrm{det}}(\mathcal{X}_{j+1})\big]
&\le \mathbb{E}\!\big[\mathcal{J}_{\mathrm{det}}(\mathcal{X}_j)\big]
- \eta\,\mu\,\mathbb{E}\!\big\|\nabla\mathcal{J}_{\mathrm{det}}(\mathcal{X}_j)\big\|^2 \\
&\qquad\qquad + \frac{L\,\Lambda^{2}\,\sigma^{2}}{2\,\mu^{2}}\,\eta^{2}. \notag
\end{align}

Telescoping this inequality from $j=0$ to $K_1-1$ and taking the minimum over $k$ yields the stated result.
\end{proof}

\subsection{Performance Frontier of RAPID Policies}
\label{subsec:pareto}

The joint objective $\mathcal{J}$ represents a fundamental trade-off between detection and estimation. Consider the multi-objective problem: $\min_{\mathcal{X} \in \mathcal{C}} \mathbf{F}(\mathcal{X})$, where $\mathbf{F}(\mathcal{X}) = \left( \mathbb{E}_{\boldsymbol{\xi},\mathbf{y}}[\mathcal{J}_{\mathrm{det}}(\mathcal{X})], \mathbb{E}_{\boldsymbol{\xi},\mathbf{y}}[\mathcal{J}_{\mathrm{est}}(\mathcal{X})] \right)^\top$.

\begin{definition}[Pareto Optimality]
A protocol $\mathcal{X}^*$ is Pareto optimal if no $\mathcal{X} \in \mathcal{C}$ satisfies
\begin{align}
\mathbb{E}_{\boldsymbol{\xi},\mathbf{y}}[\mathcal{J}_{\mathrm{det}}(\mathcal{X})] &\leq \mathbb{E}_{\boldsymbol{\xi},\mathbf{y}}[\mathcal{J}_{\mathrm{det}}(\mathcal{X}^*)], \\
\mathbb{E}_{\boldsymbol{\xi},\mathbf{y}}[\mathcal{J}_{\mathrm{est}}(\mathcal{X})] &\leq \mathbb{E}_{\boldsymbol{\xi},\mathbf{y}}[\mathcal{J}_{\mathrm{est}}(\mathcal{X}^*)],
\end{align}
with at least one inequality strict.
\end{definition}

\begin{proposition}[Baseline on the Performance Frontier]
\label{prop:pareto_baseline}
The baseline $\mathcal{X}_{\mathrm{base}}^*$, obtained by
\begin{align}
\min_{\mathcal{X} \in \mathcal{C}} \left\{ \alpha \, \mathbb{E}_{\mathbf{y}|\boldsymbol{\xi}_0}[\mathcal{J}_{\mathrm{det}}(\mathcal{X})] + \beta \, \mathrm{Tr}\left( \mathbf{W} \mathbf{J}_{\mathrm{total}}^{-1}(\boldsymbol{\xi}_0; \mathcal{X}) \right) \right\},
\end{align}
is a Pareto-optimal solution for the scalarized problem with weights $(\alpha, \beta)$ at $\boldsymbol{\xi}_0$. It represents a specific optimal trade-off point on the performance frontier.
\end{proposition}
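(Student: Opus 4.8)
The plan is to invoke the standard weighted-sum scalarization principle from multi-objective optimization: the minimizer of a strictly positively weighted combination of objectives is necessarily Pareto optimal. First I would fix the two scalar objectives that define the performance-frontier problem \emph{at the nominal value} $\boldsymbol{\xi}_0$, namely the detection coordinate $G_1(\mathcal{X}) = \mathbb{E}_{\mathbf{y}|\boldsymbol{\xi}_0}[\mathcal{J}_{\mathrm{det}}(\mathcal{X})]$ and the estimation coordinate $G_2(\mathcal{X}) = \mathrm{Tr}(\mathbf{W}\,\mathbf{J}_{\mathrm{total}}^{-1}(\boldsymbol{\xi}_0;\mathcal{X}))$. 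Here I would explicitly appeal to Proposition~\ref{prop:qcrb}: it certifies that $G_2$ is the tight, achievable lower bound on the weighted MSE at $\boldsymbol{\xi}_0$, so $G_2$ is a legitimate surrogate for the estimation risk $\mathbb{E}_{\mathbf{y}|\boldsymbol{\xi}_0}[\mathcal{J}_{\mathrm{est}}(\mathcal{X})]$ and the objective vector $(G_1, G_2)$ is the correct specialization of $\mathbf{F}$ to the degenerate prior $p(\boldsymbol{\xi}) = \delta_{\boldsymbol{\xi}_0}$.

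The core argument proceeds by contradiction. Suppose $\mathcal{X}_{\mathrm{base}}^*$, defined as $\argmin_{\mathcal{X}\in\mathcal{C}} \{\alpha G_1(\mathcal{X}) + \beta G_2(\mathcal{X})\}$, is \emph{not} Pareto optimal. Then by the definition of Pareto optimality there exists a feasible $\mathcal{X}'\in\mathcal{C}$ that weakly dominates it, i.e.\ $G_1(\mathcal{X}')\le G_1(\mathcal{X}_{\mathrm{base}}^*)$ and $G_2(\mathcal{X}')\le G_2(\mathcal{X}_{\mathrm{base}}^*)$ with at least one inequality strict. I would then multiply the first inequality by $\alpha>0$ and the second by $\beta>0$ and add: because both weights are \emph{strictly} positive, the strict inequality in one coordinate survives the summation, yielding $\alpha G_1(\mathcal{X}') + \beta G_2(\mathcal{X}') < \alpha G_1(\mathcal{X}_{\mathrm{base}}^*) + \beta G_2(\mathcal{X}_{\mathrm{base}}^*)$. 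This contradicts the defining optimality of $\mathcal{X}_{\mathrm{base}}^*$ as a global minimizer of the scalarized objective over $\mathcal{C}$, so no such dominating $\mathcal{X}'$ can exist and $\mathcal{X}_{\mathrm{base}}^*$ is Pareto optimal.

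The subtle---rather than hard---step is aligning the objective vector actually optimized by Stage~1 with the one appearing in the Pareto definition, which is written with the full prior expectation $\mathbb{E}_{\boldsymbol{\xi},\mathbf{y}}$. I would resolve this by emphasizing that the claim is Pareto optimality \emph{at} $\boldsymbol{\xi}_0$, i.e.\ with respect to the point-evaluated objectives $(G_1, G_2)$, which coincides with the general definition under the degenerate prior $\delta_{\boldsymbol{\xi}_0}$; the QCRB surrogacy from Proposition~\ref{prop:qcrb} is precisely what licenses replacing the true estimation risk by $G_2$ in this pointwise sense. A secondary caveat worth flagging in a remark is that weighted-sum scalarization recovers only those Pareto points supported by a hyperplane with normal $(\alpha,\beta)$; for a non-convex feasible image this need not trace the entire frontier, so the proposition correctly asserts optimality of \emph{a} point rather than a sweep of the whole Pareto set. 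If one additionally wished to guarantee existence of the minimizer, I would invoke continuity of $G_1, G_2$ together with compactness of $\mathcal{C}$ (the box and global-budget constraints composed with the CPTP maps) via the Weierstrass theorem, but for the stated proposition the minimizer is taken as given.
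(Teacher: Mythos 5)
Your proof is correct and rests on the same principle as the paper's own proof---weighted-sum scalarization---but where the paper disposes of the claim in a single sentence citing the multiobjective-optimization literature, you carry out the underlying contradiction argument explicitly, and in doing so you are more precise than the paper on two counts. First, the paper's one-liner states the scalarization result backwards: it asserts that a strictly positive weighted minimizer ``is a Pareto optimum for the convex case and a necessary condition for optimality in the non-convex case,'' whereas the correct statement---which your multiply-by-$\alpha,\beta>0$-and-add contradiction establishes---is that the \emph{sufficiency} direction (minimizer of a strictly positively weighted sum $\Rightarrow$ Pareto optimal) requires no convexity assumption at all; convexity is needed only for the converse, namely that every Pareto point be recoverable by some weight vector. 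Your closing caveat about supported points on a non-convex frontier identifies this distinction correctly, so your argument is both more elementary and strictly more general than the paper's as written. Second, you explicitly reconcile the objective vector actually optimized in Stage~1 (point-evaluated at $\boldsymbol{\xi}_0$, with $\mathrm{Tr}(\mathbf{W}\,\mathbf{J}_{\mathrm{total}}^{-1}(\boldsymbol{\xi}_0;\mathcal{X}))$ standing in for the estimation risk via Proposition~\ref{prop:qcrb}) with the Pareto definition's full-prior expectation $\mathbb{E}_{\boldsymbol{\xi},\mathbf{y}}$---a mismatch the paper leaves entirely implicit; reading the claim under the degenerate prior $\delta_{\boldsymbol{\xi}_0}$ is the right repair. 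The Weierstrass remark on existence of the minimizer is a reasonable optional addition that neither the paper nor the proposition strictly requires. Net effect: same route as the paper, but your version is self-contained, convexity-free, and quietly fixes a logical slip in the paper's citation-level proof.
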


\begin{proof}
By scalarization, any minimizer of a strictly positive weighted sum of objectives is a Pareto optimum for the convex case and a necessary condition for optimality in the non-convex case~\cite{miettinen2012nonlinear}.
\end{proof}

\begin{theorem}[Adaptive Navigation of the Performance Frontier]
\label{thm:pareto_adaptive}
The adaptive policy $\pi_{\omega^*}$ learned in Stage~2 dynamically adjusts the protocol based on the measurement history $\mathbf{y}_{1:n}$, enabling navigation of the trade-off between $\mathcal{J}_{\mathrm{det}}$ and $\mathcal{J}_{\mathrm{est}}$ and effective exploration of the performance frontier.
\end{theorem}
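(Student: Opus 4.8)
The plan is to recast the partially observable problem as a fully observed Markov decision process over belief states, and then argue via the scalarization structure established in Proposition~\ref{prop:pareto_baseline} that the optimal belief-conditioned policy realizes a \emph{state-dependent} trade-off weight, which is precisely what it means to navigate the detection--estimation frontier. First I would invoke the standard POMDP reduction: because the belief $b_n(\boldsymbol{\xi}) = p(\boldsymbol{\xi}\mid\mathbf{y}_{1:n})$ (summarized under the Gaussian approximation by the sufficient statistic $s_n=(\hat{\boldsymbol{\mu}}_n,\mathrm{vech}(\hat{\boldsymbol{\Sigma}}_n))$) evolves by a Markovian Bayesian filter, the induced belief-MDP admits a stationary optimal policy $\pi^\star$ that is a measurable function of $s_n$ alone. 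Existence follows from boundedness of the per-step reward $r_n=\mathrm{Tr}(\mathbf{W}\hat{\boldsymbol{\Sigma}}_{n-1})-\mathrm{Tr}(\mathbf{W}\hat{\boldsymbol{\Sigma}}_{n})$ together with compactness of the admissible action set (deviations confined to the feasible neighborhood of $\mathcal{X}_{\mathrm{base}}^*$), so a Bellman fixed point exists and the SAC objective $J(\pi)$ is a consistent stochastic approximation to it.

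Second, I would tie the two competing objectives to the state. The telescoping identity $\sum_n r_n = \mathrm{Tr}(\mathbf{W}\hat{\boldsymbol{\Sigma}}_0)-\mathrm{Tr}(\mathbf{W}\hat{\boldsymbol{\Sigma}}_{N_s})$ shows that maximizing return is exactly minimizing the terminal estimation cost $\mathcal{J}_{\mathrm{est}}$, while $\mathcal{J}_{\mathrm{det}}$ enters through the log-likelihood ratio that the same belief $b_n$ determines. The key structural step is to write the action at $s_n$ as a perturbation of the baseline, $\mathcal{X}^{(n)}=\mathcal{X}_{\mathrm{base}}^{*(n)}+a_n$, and to recall that the baseline is, by Proposition~\ref{prop:pareto_baseline}, a Pareto point for the fixed scalarization weights $(\alpha,\beta)$ at $\boldsymbol{\xi}_0$. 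A first-order expansion of $\mathbf{F}(\mathcal{X})=(\mathbb{E}_{\boldsymbol{\xi},\mathbf{y}}[\mathcal{J}_{\mathrm{det}}(\mathcal{X})],\,\mathbb{E}_{\boldsymbol{\xi},\mathbf{y}}[\mathcal{J}_{\mathrm{est}}(\mathcal{X})])^\top$ about this point shows that each admissible deviation $a_n$ moves $\mathbf{F}$ along the tangent of the frontier, with direction and magnitude selected by the current belief; thus distinct belief states produce distinct effective trade-offs.

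Finally, I would establish the two operative conclusions. (i) A performance floor: choosing $a_n\equiv\mathbf{0}$ is always feasible, so the value of $\pi^\star$ weakly dominates that of $\mathcal{X}_{\mathrm{base}}^*$, and any strict improvement under stochastic realizations or model mismatch yields a point that Pareto-dominates the baseline in the sense of the Definition preceding Proposition~\ref{prop:pareto_baseline}. (ii) Frontier coverage: because the entropy term in $J(\pi)$ keeps the learned policy stochastic with full support on the admissible deviation set, the induced distribution over realized protocols has nonzero measure along the tangent directions identified above, so the family of trajectories sweeps out a neighborhood of the frontier rather than collapsing to a single operating point.

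The main obstacle is that the statement as written is qualitative, so the real work is to \emph{make ``navigation'' precise}. I expect the delicate step to be the first-order frontier-tangent argument: it requires the scalarized objective to be differentiable in $\mathcal{X}$ at the baseline and the QFIM to be nonsingular there (so that $\mathbf{J}_{\mathrm{total}}^{-1}$, and hence $\mathcal{J}_{\mathrm{est}}$, vary smoothly), conditions guaranteed by Assumption~\ref{ass:conv} but which must be checked at the projected, possibly active-constraint, baseline. Rigorously upgrading ``sweeps out a neighborhood'' to a genuine covering of a frontier segment would, in full generality, need a controllability argument on the belief dynamics showing that the reachable set of terminal statistics has nonempty interior; I would either assume this reachability explicitly or restrict the claim to local navigation in a neighborhood of $\mathcal{X}_{\mathrm{base}}^*$.
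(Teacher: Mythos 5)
Your proposal is correct in substance but takes a genuinely different---and far more ambitious---route than the paper. The paper's entire proof is two sentences: it observes that the Stage-2 policy is trained to minimize the scalarized objective $\mathbb{E}_{\boldsymbol{\xi},\mathbf{y}}[\alpha\,\mathcal{J}_{\mathrm{det}}+\beta\,\mathcal{J}_{\mathrm{est}}]$ and then invokes Proposition~\ref{prop:pareto_baseline} to conclude that a weighted minimizer lies on the performance frontier for those weights. Nothing in that argument engages the belief dynamics, the POMDP structure, or what ``navigation'' means; it places a single operating point on the frontier and stops. You, by contrast, reduce the POMDP to a belief MDP with sufficient statistic $s_n$, use the telescoping reward identity to tie the return to the terminal estimation cost, argue via a first-order expansion at the Pareto point that belief-conditioned deviations implement a \emph{state-dependent} trade-off weight, and use the SAC entropy term to argue that realized protocols cover a neighborhood of the frontier. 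Your version proves something much closer to the theorem's literal claim of navigation and exploration, at the price of extra hypotheses (smoothness at the projected baseline, nonsingular QFIM, reachability of terminal statistics) that you correctly flag; the paper's version buys brevity but establishes only frontier membership for one fixed pair $(\alpha,\beta)$.

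Two caveats on your sketch. First, the tangent step overclaims: at a Pareto point supported by weights $(\alpha,\beta)$, stationarity of the scalarization gives $\alpha\,\delta\mathcal{J}_{\mathrm{det}}+\beta\,\delta\mathcal{J}_{\mathrm{est}}\ge 0$ for feasible first-order variations, so an admissible deviation $a_n$ moves $\mathbf{F}$ either along the supporting hyperplane or into the dominated half-space---not every deviation is tangent to the frontier, and navigation comes only from the tangential components. Second, you are right (and more faithful to Algorithm~\ref{alg:rapid-stage2} than the paper's own proof) that the Stage-2 reward $r_n=\mathrm{Tr}(\mathbf{W}\hat{\boldsymbol{\Sigma}}_{n-1})-\mathrm{Tr}(\mathbf{W}\hat{\boldsymbol{\Sigma}}_{n})$ telescopes to the estimation cost alone; the paper's proof asserts the policy minimizes the full $\alpha$--$\beta$-weighted risk, which does not match the stated reward, and your gestural route of injecting $\mathcal{J}_{\mathrm{det}}$ through the belief would need to be made explicit (e.g., an additional detection-dependent reward term) for either proof to be airtight.
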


\begin{proof}
The policy is trained to minimize $\mathbb{E}_{\boldsymbol{\xi},\mathbf{y}}[\alpha \mathcal{J}_{\mathrm{det}} + \beta \mathcal{J}_{\mathrm{est}}]$. By Proposition~\ref{prop:pareto_baseline}, such weighted minimization yields solutions on the performance frontier for the given weights.
\end{proof}

\subsection{Performance Guarantees of the Hybrid Framework}
\label{subsec:performance_guarantee}

The hybrid framework provides a fundamental information-theoretic guarantee for the adaptive component.

\begin{theorem}[Value of Information for Adaptive Policies]
\label{thm:value_of_info}
Let $\mathcal{I}_{n} = \sigma(\mathbf{y}_{1:n})$ be the $\sigma$-algebra generated by the measurement history. The adaptive protocol $\mathcal{X}(\mathcal{I}_{n})$ is $\mathcal{I}_{n}$-measurable, while the baseline $\mathcal{X}_{\mathrm{base}}^*$ is static. It follows from the principle of information-theoretic optimality that:
\begin{align}
\inf_{\mathcal{X}(\mathcal{I}_{n})} \mathbb{E}_{\boldsymbol{\xi},\mathbf{y}}\left[\mathcal{J}(\mathcal{X}(\mathcal{I}_{n}))\right] \leq \mathbb{E}_{\boldsymbol{\xi},\mathbf{y}}\left[\mathcal{J}(\mathcal{X}_{\mathrm{base}}^*)\right].
\end{align}
The adaptive policy $\pi_{\omega^*}$ seeks to approximate this infimum.
\end{theorem}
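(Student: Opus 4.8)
The plan is to prove the inequality by a \emph{feasibility-containment} argument, recognizing that the static baseline is nothing more than a degenerate adaptive policy. First I would observe that any constant map is measurable with respect to every $\sigma$-algebra; in particular the static baseline induces a (trivially) $\mathcal{I}_n$-measurable protocol $\mathcal{X}(\mathcal{I}_n)\equiv\mathcal{X}_{\mathrm{base}}^*$. Hence $\mathcal{X}_{\mathrm{base}}^*$ lies in the feasible set over which the infimum on the left-hand side is taken, and by the elementary property $\inf_{x\in S}f(x)\le f(x_0)$ for every $x_0\in S$, the stated bound follows immediately. At this level the result is essentially definitional: enlarging the admissible policy class from static to history-adapted protocols can only lower the attainable Bayesian risk.

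To expose \emph{why} adaptivity helps --- the ``information-theoretic optimality'' invoked in the statement --- I would next give the sharper conditional-expectation argument. Using the tower property, write
\begin{align}
\mathbb{E}_{\boldsymbol{\xi},\mathbf{y}}\!\left[\mathcal{J}(\mathcal{X}(\mathcal{I}_n))\right] = \mathbb{E}\!\left[\,\mathbb{E}\!\left[\mathcal{J}(\mathcal{X}(\mathcal{I}_n)) \mid \mathcal{I}_n\right]\right].
\end{align}
Since an $\mathcal{I}_n$-measurable protocol may be selected to minimize the inner conditional risk \emph{pointwise} on each realization of the history, the adaptive optimum equals $\mathbb{E}\big[\inf_{x}\mathbb{E}[\mathcal{J}(x)\mid\mathcal{I}_n]\big]$, which is bounded above by $\mathbb{E}\big[\mathbb{E}[\mathcal{J}(\mathcal{X}_{\mathrm{base}}^*)\mid\mathcal{I}_n]\big]=\mathbb{E}[\mathcal{J}(\mathcal{X}_{\mathrm{base}}^*)]$, recovering the inequality and locating the gain precisely in the act of conditioning on $\mathbf{y}_{1:n}$.

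The main obstacle is not the inequality itself but the \emph{measurable-selection} step in the refined argument: one must guarantee that a pointwise minimizer $x^\star(\mathcal{I}_n)\in\argmin_x \mathbb{E}[\mathcal{J}(x)\mid\mathcal{I}_n]$ can be chosen as an $\mathcal{I}_n$-measurable function of the history, so that the interchange of $\inf$ and $\mathbb{E}$ is legitimate. I would resolve this by invoking a standard measurable-selection theorem (e.g.\ Kuratowski--Ryll-Nardzewski), whose hypotheses are met because the constraint set $\mathcal{C}$ of Section~\ref{sec:optimization} is compact and $\mathcal{J}$ is continuous in $\mathcal{X}$; the containment argument of the first paragraph provides a self-contained fallback that needs no selection theorem. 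Finally, I would flag that the theorem's closing sentence is a statement about the learning procedure rather than a mathematical claim --- the Soft Actor-Critic policy $\pi_{\omega^*}$ is a function approximator that \emph{seeks} this infimum, with no tightness of the approximation asserted.
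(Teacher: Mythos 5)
Your first paragraph reproduces the paper's own proof exactly: the paper argues that the static protocol is a special case of an adaptive protocol that ignores $\mathcal{I}_n$, which is precisely your feasibility-containment observation that a constant map is trivially $\mathcal{I}_n$-measurable, so the infimum bound is immediate. Your added tower-property refinement and the measurable-selection caveat are sound supplements that go beyond the paper's one-line argument, but they elaborate the same route rather than constituting a different one.
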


\begin{proof}
The static protocol is a special case of an adaptive protocol that ignores $\mathcal{I}_{n}$. Therefore, the minimal achievable cost with more information ($\mathcal{I}_{n}$) is less than or equal to the cost achievable with less information.
\end{proof}

\begin{corollary}[Non-Negative Adaptive Gain]
\label{cor:adaptive_gain}
The expected Bayesian risk of the adaptive policy is bounded above by the baseline risk:
\begin{align}
\mathbb{E}_{\boldsymbol{\xi},\mathbf{y}}\left[\mathcal{J}(\pi_{\omega^*})\right] \leq \mathbb{E}_{\boldsymbol{\xi},\mathbf{y}}\left[\mathcal{J}(\mathcal{X}_{\mathrm{base}}^*)\right].
\end{align}
The adaptive gain $\Delta \mathcal{J} = \mathbb{E}[\mathcal{J}(\mathcal{X}_{\mathrm{base}}^*)] - \mathbb{E}[\mathcal{J}(\pi_{\omega^*})]$ is non-negative.
\end{corollary}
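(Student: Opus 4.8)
The plan is to derive the corollary as an immediate consequence of Theorem~\ref{thm:value_of_info}, combined with the structural observation that the baseline protocol is itself a member of the adaptive policy class. The whole argument is a monotonicity-of-information statement, so no new estimates are required; the work is in making precise the embedding of the static baseline into the family of $\mathcal{I}_n$-measurable policies.

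First I would recall the construction of the Stage~2 action space: each action is a \emph{deviation} $a_n = (\Delta\mathbf{u}^{(n)}, \Delta T^{(n)}, \Delta S^{(n)})$ from the baseline, so the applied controls are $\mathbf{u}_{\mathrm{base}}^{(n)} + \Delta\mathbf{u}^{(n)}$ (and likewise for $T$ and $S$). Consequently the degenerate policy $\pi_0$ that emits $a_n \equiv \mathbf{0}$ for every belief state reproduces the non-adaptive baseline exactly, so that $\mathcal{J}(\pi_0) = \mathcal{J}(\mathcal{X}_{\mathrm{base}}^*)$. This identifies $\mathcal{X}_{\mathrm{base}}^*$ as a feasible element of the adaptive policy family over which Theorem~\ref{thm:value_of_info} takes its infimum.

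Next I would invoke Theorem~\ref{thm:value_of_info} directly: since the static protocol ignores the filtration $\mathcal{I}_n = \sigma(\mathbf{y}_{1:n})$, it is the special case of an adaptive protocol that discards the measurement record, and the infimum of the expected risk over $\mathcal{I}_n$-measurable policies cannot exceed the value attained by any fixed member, in particular $\pi_0$. Defining $\pi_{\omega^*}$ as the policy attaining this infimum over the parameterized class then gives $\mathbb{E}_{\boldsymbol{\xi},\mathbf{y}}[\mathcal{J}(\pi_{\omega^*})] \leq \mathbb{E}_{\boldsymbol{\xi},\mathbf{y}}[\mathcal{J}(\pi_0)] = \mathbb{E}_{\boldsymbol{\xi},\mathbf{y}}[\mathcal{J}(\mathcal{X}_{\mathrm{base}}^*)]$, which is exactly the stated inequality; rearranging shows the adaptive gain $\Delta\mathcal{J} = \mathbb{E}[\mathcal{J}(\mathcal{X}_{\mathrm{base}}^*)] - \mathbb{E}[\mathcal{J}(\pi_{\omega^*})] \geq 0$.

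The main obstacle is bridging the idealized infimum and the policy that SAC actually returns: in practice the learned $\pi_{\omega^*}$ only \emph{approximates} the infimum, and the SAC training objective additionally carries the entropy regularizer $\tau\,\mathcal{H}(\pi(\cdot|s_t))$, so the optimizer is not minimizing $\mathcal{J}$ verbatim. I would close this gap by appealing to the warm-start: because training is initialized at $\pi_0$ (the parameters are set so that $\pi_\omega(s)\approx\mathbf{0}$) and policy improvement retains the best policy encountered, the returned policy is guaranteed no worse than its initialization, which is the baseline. Stating the corollary for the optimal, or best-retained, policy within the class therefore makes the inequality exact; for the raw SAC output the bound holds only up to the optimization and entropy-regularization error, which I would flag as the genuine caveat rather than attempt to quantify here.
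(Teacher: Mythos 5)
Your argument is correct and follows the same high-level route as the paper---deriving the corollary from Theorem~\ref{thm:value_of_info}---but it is substantially more careful, and in fact it repairs a logical gap in the paper's own one-sentence proof. The paper asserts that the corollary ``follows directly from Theorem~\ref{thm:value_of_info}, as $\pi_{\omega^*}$ implements a specific feasible adaptive strategy.'' Read literally, that justification points the wrong way: feasibility of $\pi_{\omega^*}$ only shows that its risk \emph{upper-bounds} the infimum in Theorem~\ref{thm:value_of_info}; it does not show $\pi_{\omega^*}$ is dominated by the baseline. What actually closes the argument is exactly the two ingredients you supply: (i) the explicit embedding of the static baseline into the Stage-2 policy class as the zero-deviation policy $\pi_0$, so that $\mathcal{J}(\pi_0)=\mathcal{J}(\mathcal{X}_{\mathrm{base}}^*)$ and the baseline value is attainable \emph{within} the class over which one optimizes; and (ii) a reason why the returned policy does no worse than $\pi_0$---either because it attains (or approximates) the infimum over the parameterized class, or via the warm-start-plus-best-retained-policy argument. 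Your closing caveat is also the honest qualification the paper omits: SAC optimizes the entropy-regularized objective with the $\tau\,\mathcal{H}(\pi(\cdot|s_t))$ term rather than $\mathcal{J}$ verbatim, and only approximately, so the inequality is exact for the optimal or best-retained policy in the class but holds for the raw SAC output only up to optimization and regularization error. In short: same skeleton as the paper, but your version makes the baseline embedding explicit and correctly identifies where the guarantee rests on assumptions about the learned policy that the paper leaves unstated.
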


\begin{proof}
The corollary follows directly from Theorem~\ref{thm:value_of_info}, as $\pi_{\omega^*}$ implements a specific feasible adaptive strategy.
\end{proof}

\subsection{Summary of Theoretical Guarantees}

The RAPID framework provides a hierarchy of performance guarantees:
\begin{itemize}
    \item \textbf{Theorem~\ref{thm:qcrb_baseline}} ensures the baseline is fundamentally limited only by the QCRB.
    \item \textbf{Theorem~\ref{thm:psngd_convergence}} ensures this baseline can be found efficiently.
    \item \textbf{Theorem~\ref{thm:value_of_info}} guarantees that adaptation, through information exploitation, cannot degrade performance and typically improves it.
\end{itemize}
This layered approach provides a robust foundation for quantum-enhanced demodulation, validated empirically in Section~\ref{sec:simulations}.
\section{Quantum-Enhanced Demodulation: Breaking Classical Limits}
\label{sec:quantum_limits}

This section establishes the theoretical principles that enable the RAPID framework to achieve a quantum advantage, demonstrably breaking the standard quantum limit (SQL) for classical receivers. We formalize how optimal control of solid-state spins unlocks a fundamental scaling superiority in parameter estimation, directly enabling the demodulation of signals otherwise lost in noise. We connect the optimization of the QFIM to these fundamental limits, demonstrating how optimal control and adaptive measurement allow an NV-center sensor to surpass the performance of any classical receiver.

\subsection{Quantum Limits of Parameter Estimation}

The precision of estimating a parameter $\xi$ from a quantum system is fundamentally bounded by the QCRB. For a multi-parameter problem, this extends to a matrix inequality.

\begin{theorem}[Multi-Parameter QCRB]
\label{thm:multi_param_qcrb}
For any unbiased estimator $\hat{\boldsymbol{\xi}}$ of the true parameter vector $\boldsymbol{\xi}$ encoded in a quantum state $\rho_{\boldsymbol{\xi}}$, the covariance matrix is bounded by the inverse of the QFIM:
\begin{equation}
\mathrm{Cov}(\hat{\boldsymbol{\xi}}) \succeq \frac{1}{M} \mathbf{J}^{-1}(\boldsymbol{\xi}),
\end{equation}
where $M$ is the number of independent experimental repetitions (shots). The QFIM elements are given by:
\begin{equation}
[\mathbf{J}(\boldsymbol{\xi})]_{ij} = \frac{1}{2} \mathrm{Tr}\left[ \rho_{\boldsymbol{\xi}} \left\{ L_{\xi_i}, L_{\xi_j} \right\} \right],
\end{equation}
and the Symmetric Logarithmic Derivative (SLD) $L_{\xi_i}$ for parameter $\xi_i$ is defined implicitly by:
\begin{equation}
\frac{\partial \rho_{\boldsymbol{\xi}}}{\partial \xi_i} = \frac{1}{2} ( L_{\xi_i} \rho_{\boldsymbol{\xi}} + \rho_{\boldsymbol{\xi}} L_{\xi_i} ).
\end{equation}
The Stage 1 baseline protocol $\mathcal{X}_{\mathrm{base}}^*$ is designed to minimize the A-optimality criterion $\mathrm{Tr}\left( \mathbf{W} \, \mathbf{J}^{-1}(\boldsymbol{\xi}_0; \mathcal{X}) \right)$, which is a tight, achievable bound on the weighted mean squared error.
\end{theorem}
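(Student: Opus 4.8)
The plan is to prove the matrix inequality by reducing the quantum bound to the classical Cram\'er--Rao bound composed with a measurement-independent quantum-to-classical Fisher information comparison, and to treat the QFIM formula and SLD definition as the algebraic scaffolding rather than as separate claims. First I would record the elementary properties of the symmetric logarithmic derivative forced by its defining equation: taking the trace of both sides and using $\mathrm{Tr}[\rho_{\boldsymbol{\xi}}]=1$ gives $\mathrm{Tr}[\rho_{\boldsymbol{\xi}}L_{\xi_i}]=0$, while Hermiticity of $\rho_{\boldsymbol{\xi}}$ and of $\partial_i\rho_{\boldsymbol{\xi}}$ forces each $L_{\xi_i}$ to be self-adjoint. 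These facts make $L_{\xi_i}$ the quantum analogue of the classical score and guarantee that $[\mathbf{J}(\boldsymbol{\xi})]_{ij}=\tfrac{1}{2}\mathrm{Tr}[\rho_{\boldsymbol{\xi}}\{L_{\xi_i},L_{\xi_j}\}]$ is real, symmetric, and positive semidefinite.

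The core of the argument is a two-step sandwich. For any POVM $\{\Pi_y\}$ employed by the receiver, the outcome law $p(y\mid\boldsymbol{\xi})=\mathrm{Tr}[\rho_{\boldsymbol{\xi}}\Pi_y]$ induces a classical Fisher information matrix $\mathbf{F}(\boldsymbol{\xi})$, and the multivariate $M$-shot classical Cram\'er--Rao bound (Fisher information being additive over independent repetitions) gives $\mathrm{Cov}(\hat{\boldsymbol{\xi}})\succeq\frac{1}{M}\mathbf{F}^{-1}(\boldsymbol{\xi})$. It then remains to establish the domination $\mathbf{F}(\boldsymbol{\xi})\preceq\mathbf{J}(\boldsymbol{\xi})$, which I would prove quadratic-form-wise. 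Fixing a real vector $\mathbf{a}$ and writing $L_{\mathbf{a}}=\sum_i a_i L_{\xi_i}$, substitution of the SLD definition yields $\sum_i a_i\,\partial_i p(y)=\Re\,\mathrm{Tr}[\rho_{\boldsymbol{\xi}}L_{\mathbf{a}}\Pi_y]$. Bounding $(\Re z)^2\le|z|^2$ and applying the operator Cauchy--Schwarz inequality to the factorization $\mathrm{Tr}[(\sqrt{\Pi_y}\sqrt{\rho_{\boldsymbol{\xi}}})(\sqrt{\rho_{\boldsymbol{\xi}}}L_{\mathbf{a}}\sqrt{\Pi_y})]$ produces $(\sum_i a_i\partial_i p(y))^2\le p(y\mid\boldsymbol{\xi})\,\mathrm{Tr}[\Pi_y L_{\mathbf{a}}\rho_{\boldsymbol{\xi}}L_{\mathbf{a}}]$; dividing by $p(y\mid\boldsymbol{\xi})$, summing over $y$, and using $\sum_y\Pi_y=\mathbb{I}$ collapses the right-hand side to $\mathrm{Tr}[\rho_{\boldsymbol{\xi}}L_{\mathbf{a}}^2]=\mathbf{a}^\top\mathbf{J}(\boldsymbol{\xi})\mathbf{a}$, while the left-hand side is exactly $\mathbf{a}^\top\mathbf{F}(\boldsymbol{\xi})\mathbf{a}$. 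Since $\mathbf{a}$ is arbitrary, $\mathbf{F}\preceq\mathbf{J}$, and operator-antitonicity of matrix inversion on the positive-definite cone upgrades this to $\mathbf{F}^{-1}\succeq\mathbf{J}^{-1}$, closing the chain $\mathrm{Cov}(\hat{\boldsymbol{\xi}})\succeq\frac{1}{M}\mathbf{F}^{-1}\succeq\frac{1}{M}\mathbf{J}^{-1}$.

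I expect the main obstacle to be the operator Cauchy--Schwarz step combined with the real-part inequality, since this is precisely where the passage from classical to quantum information is encoded and where positivity of the POVM elements and self-adjointness of the SLDs must be used together; the factorization through $\sqrt{\rho_{\boldsymbol{\xi}}}$ and $\sqrt{\Pi_y}$ has to be justified on the support of $\rho_{\boldsymbol{\xi}}$ to remain valid when the state is rank-deficient. A secondary point requiring care is the achievability asserted in the A-optimality claim: the scalar bound $\mathrm{Tr}[\mathbf{W}\mathbf{J}^{-1}]$ is attained in the $M\to\infty$ limit by the maximum-likelihood estimator only when a single measurement is simultaneously optimal for all components, which in the genuinely multi-parameter setting requires the weak-compatibility condition $\mathrm{Tr}[\rho_{\boldsymbol{\xi}}[L_{\xi_i},L_{\xi_j}]]=0$. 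I would flag this caveat explicitly and argue that for the field quadratures relevant to the NV readout the SLDs weakly commute on the support, so the bound is saturable and the Stage~1 A-optimality objective is well posed.
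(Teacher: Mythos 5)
Your proposal is correct, and it supplies substantially more than the paper does: the paper's entire proof of this theorem is a deferral, asserting the inequality as ``a standard result in quantum estimation theory,'' citing Braunstein--Caves and Paris, and adding only the remark that the multi-parameter bound is achievable when the SLDs weakly commute, $\mathrm{Tr}\left(\rho_{\boldsymbol{\xi}}[L_{\xi_i},L_{\xi_j}]\right)=0$. Your two-step sandwich --- the classical multivariate Cram\'er--Rao bound with $M$-shot additivity of Fisher information, followed by the measurement-independent domination $\mathbf{F}(\boldsymbol{\xi})\preceq\mathbf{J}(\boldsymbol{\xi})$ proved quadratic-form-wise via the factorization $\mathrm{Tr}\bigl[(\sqrt{\Pi_y}\sqrt{\rho_{\boldsymbol{\xi}}})(\sqrt{\rho_{\boldsymbol{\xi}}}L_{\mathbf{a}}\sqrt{\Pi_y})\bigr]$ and the Hilbert--Schmidt Cauchy--Schwarz inequality, closed by antitonicity of matrix inversion --- is precisely the Braunstein--Caves derivation that the paper's citations point to, so the two routes agree in substance; yours simply makes the result self-contained rather than outsourced. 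The individual steps check out: $\mathrm{Tr}[\rho_{\boldsymbol{\xi}}L_{\xi_i}]=0$ and self-adjointness of the SLDs do follow from the defining Lyapunov equation, the identity $\sum_i a_i\,\partial_i p(y)=\Re\,\mathrm{Tr}[\rho_{\boldsymbol{\xi}}L_{\mathbf{a}}\Pi_y]$ is correct, and your flagged caveats are exactly the right ones (restriction to the support of $\rho_{\boldsymbol{\xi}}$ for rank-deficient states, where the SLD is unique only up to its action off the support; discarding outcomes with $p(y\mid\boldsymbol{\xi})=0$ from the sum; invertibility of $\mathbf{F}$ before invoking antitonicity). On the achievability clause your treatment is if anything more careful than the paper's: both of you identify the weak-compatibility condition as what the ``tight, achievable'' claim for $\mathrm{Tr}\left(\mathbf{W}\,\mathbf{J}^{-1}\right)$ hinges on in the genuinely multi-parameter setting, but note that your closing assertion that the SLDs weakly commute for the NV readout is stated rather than verified --- a gap, though one the paper shares, and one worth either checking explicitly or softening toward the Holevo bound if the A-optimality objective is to be called achievable without qualification.
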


\begin{proof}
The proof is a standard result in quantum estimation theory. The inequality holds in the positive semi-definite sense. The achievability of the bound for a single parameter is well-established in the asymptotic limit ($M \to \infty$). For multiple parameters, the bound is achievable if the SLDs commute on the support of $\rho_{\boldsymbol{\xi}}$, i.e., $\mathrm{Tr}(\rho_{\boldsymbol{\xi}} [L_{\xi_i}, L_{\xi_j}]) = 0$ for all $i, j$ \cite{braunstein1994statistical, paris2009quantum}.
\end{proof}

\begin{remark}
The critical insight of our work is that the protocol parameters $\mathcal{X} = \{\mathbf{u}^{(n)}, T^{(n)}, S^{(n)}\}$ directly control the quantum state $\rho_{\boldsymbol{\xi}}$ and therefore the QFIM $\mathbf{J}(\boldsymbol{\xi}; \mathcal{X})$. The Stage 1 optimization of $\mathcal{X}$ is thus an optimization of the fundamental quantum limit itself. For NV centers, the control $\mathbf{u}^{(n)}$ modulates the sensor's sensitivity, making the QFIM a controllable function of the hardware.
\end{remark}

\subsection{Quantum Advantage via Optimal Control and Entanglement}

The key to surpassing classical limits lies in the scaling of the QFI with the number of quantum resources $N$ (NV centers). The following proposition formalizes this advantage.

\begin{proposition}[Scaling Laws for Quantum Demodulation]
\label{prop:scaling_laws}
Let $N$ be the number of sensing centers and $T$ the sensing time. The achievable QFI for estimating a magnetic field parameter $\xi$ scales as follows under different sensing strategies:
\begin{enumerate}
    \item \textbf{Classical SQL:} For $N$ independent classical sensors or unentangled quantum sensors subject to local decoherence,
    \begin{equation}
        \mathbf{J}_{\mathrm{SQL}}(\xi) \propto N \cdot T^2.
    \end{equation}
    
    \item \textbf{Quantum Heisenberg Limit (HL):} For a fully entangled state of $N$ sensors (e.g., a \emph{Greenberger–Horne–Zeilinger (GHZ)} state) in a decoherence-free regime,
    \begin{equation}
        \mathbf{J}_{\mathrm{HL}}(\xi) \propto N^2 \cdot T^2.
    \end{equation}
    
    \item \textbf{Optimal Entangled Strategy under Decoherence:} Under local dephasing with rate $\Gamma_\phi = 1/T_2$, the optimal interrogation time for a separable state is $T^* \sim T_2$, yielding:
    \begin{equation}
        \max_T \mathbf{J}_{\mathrm{SQL}}(\xi) \propto \frac{N}{\Gamma_\phi^2}.
    \end{equation}
    Entangled states such as GHZ lose their $N^2$ scaling advantage under decoherence but can still provide a constant factor improvement over the SQL.
\end{enumerate}
The RAPID framework, by optimizing the control sequence $\mathbf{u}^{(n)}$ and duration $T^{(n)}$ (see Proposition~\ref{prop:allocation}), achieves the optimal scaling permissible by the sensor's decoherence properties, enabling a quantum enhancement ranging from a constant factor improvement to the Heisenberg-limited $\mathcal{O}(N)$ improvement in estimation error variance ($\mathcal{O}(N^2)$ in QFI).
\end{proposition}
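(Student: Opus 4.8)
The plan is to derive each scaling law by directly computing the QFI through the generator-variance form of the bound in Theorem~\ref{thm:multi_param_qcrb}: for a pure probe state $|\psi_\xi\rangle = e^{-i\xi\mathbf{G}}|\psi_0\rangle$ the QFI equals $4\,\mathrm{Var}_{\psi_0}(\mathbf{G})$, and decoherence is then folded in through the qubit Bloch-vector QFI formula. The three regimes differ only in the generator $\mathbf{G}$ and in the surviving coherence, so each case reduces to identifying these two quantities.

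For the classical SQL (part 1) I would model each NV center as an effective two-level probe accumulating a signal phase $\omega(\xi)T$ under the linearized Zeeman coupling of Section~\ref{sec:system_model}, with $\omega(\xi)=\gamma_e\xi$. Preparing the equatorial superposition gives the single-probe generator $\mathbf{G}=\tfrac{T}{2}\sigma_z$, hence $\mathrm{Var}(\mathbf{G})=T^2/4$ and single-probe QFI $\propto T^2$; the additivity of the QFIM across independent centers (already invoked in Theorem~\ref{thm:qcrb_baseline}) then yields $\mathbf{J}_{\mathrm{SQL}}\propto N T^2$. For the Heisenberg limit (part 2) I replace the product probe by the GHZ state, whose collective generator $\mathbf{G}=\tfrac{T}{2}\sum_k\sigma_z^{(k)}$ takes eigenvalues $\pm NT/2$ on the two branches, so $\mathrm{Var}(\mathbf{G})=N^2T^2/4$ and $\mathbf{J}_{\mathrm{HL}}\propto N^2T^2$; the quadratic-in-$N$ gain is the $N$-fold amplification of the accumulated phase.

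The decoherence analysis (part 3) is the crux. Under local dephasing each probe's coherence decays as $e^{-\Gamma_\phi T}$, so its equatorial Bloch vector has length $r=e^{-\Gamma_\phi T}$ and the qubit QFI formula gives $J_1(T)\propto T^2 e^{-2\Gamma_\phi T}$; maximizing over $T$ gives the stationarity condition $\Gamma_\phi T^*=1$, i.e.\ $T^*\sim T_2$, and substituting back yields $\max_T\mathbf{J}_{\mathrm{SQL}}\propto N/\Gamma_\phi^2$. For GHZ the decisive feature is superdecoherence: the two maximally distinguishable branches dephase $N$ times faster, so the collective coherence decays as $e^{-N\Gamma_\phi T}$ and $J_{\mathrm{GHZ}}(T)\propto N^2 T^2 e^{-2N\Gamma_\phi T}$, optimized at $T^*_{\mathrm{GHZ}}=1/(N\Gamma_\phi)$. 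I would then compare strategies at a fixed total budget $T_{\mathrm{tot}}$ by summing the additive QFI over $\nu=T_{\mathrm{tot}}/T^*$ independent repetitions: the product strategy accumulates $\propto N T_{\mathrm{tot}}/\Gamma_\phi$, while the GHZ strategy accumulates the same order because its $N$-fold shorter optimal interrogation exactly cancels its per-shot enhancement, leaving only a constant factor.

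The hard part is precisely this resource-fair comparison: one must correctly attribute the collective rate $N\Gamma_\phi$ to the GHZ branch and treat the total time budget $T_{\mathrm{tot}}$ (rather than a single shot) as the conserved resource, since fixing the interrogation time instead would spuriously preserve the $N^2$ advantage. The remaining parts are routine generator-variance computations. The closing RAPID claim then follows by observing that Stage~1 optimizes exactly $T^{(n)}$ and $\mathbf{u}^{(n)}$, driving the protocol to the decoherence-limited optimum $T^*\sim T_2$ derived above and thereby realizing the best scaling the sensor's $T_2$ permits.
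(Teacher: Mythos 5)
Your proposal is correct and follows the same skeleton as the paper's proof: parts (1)--(2) via QFI additivity for product states and the generator-variance formula $\mathbf{J}=4\,\mathrm{Var}(H)\,T^2$ evaluated on the GHZ state (the paper computes $\mathrm{Var}(H)=N^2/4$ exactly as you do), and part (3) via GHZ superdecoherence $e^{-N\Gamma_\phi T}$ forcing $T^*_{\mathrm{GHZ}}\sim 1/(N\Gamma_\phi)$. Where you genuinely go beyond the paper is in the decoherence analysis. The paper's proof asserts in one line that the shortened interrogation time ``cancels the $N^2$ advantage,'' but never derives the separable optimum $T^*=1/\Gamma_\phi$ or the value $\max_T \mathbf{J}_{\mathrm{SQL}}\propto N/\Gamma_\phi^2$ claimed in the statement, and it performs no resource accounting at all. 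You supply both: the Bloch-vector QFI $J_1(T)\propto T^2 e^{-2\Gamma_\phi T}$ with its stationarity condition $\Gamma_\phi T^*=1$, and, crucially, the fixed-budget comparison with $\nu=T_{\mathrm{tot}}/T^*$ repetitions showing both strategies accumulate total QFI $\propto N\,T_{\mathrm{tot}}/\Gamma_\phi$. This last step is not optional polish --- it is needed to make the paper's claim rigorous, since a literal per-shot comparison at the respective optima gives GHZ QFI $\propto 1/\Gamma_\phi^2$ against separable $\propto N/\Gamma_\phi^2$, i.e.\ a factor-$N$ \emph{deficit} rather than parity; only after crediting the GHZ strategy with $N$-fold more repetitions does the ``constant factor'' conclusion follow, exactly as you observe. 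Your closing appeal to Proposition~\ref{prop:allocation} for the RAPID claim matches the paper's own cross-reference. The one caveat worth noting is cosmetic: your per-shot Fisher model $T^2 e^{-2\Gamma_\phi T}$ differs from the paper's phenomenological model $\kappa\,T\,e^{-T/T_2^{\mathrm{eff}}}$ in Proposition~\ref{prop:allocation} (quadratic versus linear in $T$, and a factor of two in the exponent), but both yield $T^*\sim T_2$ and the same scaling in $N$ and $\Gamma_\phi$, so the discrepancy is harmless at the level of the stated proposition.
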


\begin{proof}
The SQL scaling arises from the additivity of the QFI for product states: $\mathbf{J}_{\mathrm{tot}} = \sum_{k=1}^N \mathbf{J}_k = N \mathbf{J}_1$. For a phase estimation protocol, $\mathbf{J}_1 \propto T^2$, hence the result. For an $N$-particle GHZ state, the quantum state is $(|0\rangle^{\otimes N} + |1\rangle^{\otimes N})/\sqrt{2}$. The generator of the phase shift is $H = \sum_{k=1}^N \sigma_z^{(k)}/2$, and the variance of $H$ in the GHZ state is $\mathrm{Var}(H) = N^2/4$, leading to a QFI of $\mathbf{J} = 4 \mathrm{Var}(H) T^2 = N^2 T^2$. Under local dephasing, the coherence of a GHZ state decays exponentially as $\exp(-N \Gamma_\phi T)$, forcing the optimal interrogation time to scale as $T^* \sim 1/(N \Gamma_\phi)$, which cancels the $N^2$ advantage and leaves at best a constant factor improvement over optimized separable strategies.
\end{proof}

\begin{remark}
While maintaining full Heisenberg scaling with $N$ is challenging for solid-state ensembles due to inherent decoherence, the RAPID framework is designed to asymptotically approach the best possible scaling permissible by the specific decoherence processes of the NV-center sensor, which is the true meaning of a quantum advantage in practice.
\end{remark}

\subsection{Optimal Resource Allocation within Coherence Time}

The performance of the quantum sensor is constrained by its finite coherence time. The Fisher information for a phase estimation protocol (e.g., a Ramsey sequence) under exponential decoherence is well-modeled by $I(T) \propto T^2 \exp(-T / T_2)$. Our phenomenological model generalizes this to include the effect of control amplitude. The following proposition characterizes the optimal allocation of sensing time per shot.

\begin{proposition}[Optimal Sensing Time under Decoherence]
\label{prop:allocation}
Consider the phenomenological model for the Fisher information per shot from a single sensing center, under a control sequence with amplitude $\|\mathbf{u}\|$ and duration $T$:
\begin{equation}
I(T, \mathbf{u}) = \kappa(\mathbf{u}) \, T \, \exp(-T / T_2^{\mathrm{eff}}),
\end{equation}
where $\kappa(\mathbf{u}) \propto \|\mathbf{u}\|^2$ encapsulates the control efficiency and $T_2^{\mathrm{eff}}$ is the effective coherence time. For a fixed control amplitude, the function $I(T)$ is unimodal. Its maximum is achieved at the optimal sensing time:
\begin{equation}
T^* = T_2^{\mathrm{eff}}.
\end{equation}
This result directly informs the constraint $T^{(n)} \le T_2^{\mathrm{eff}}$ in the RAPID optimization problem, preventing the use of durations where information gain has saturated or decayed.
\end{proposition}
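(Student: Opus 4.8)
The plan is to treat this as an elementary univariate optimization. Fixing the control amplitude $\mathbf{u}$ renders $\kappa(\mathbf{u})$ a positive multiplicative constant, so $I(T,\mathbf{u})$ reduces to the single-variable function $I(T) = \kappa\, T\, e^{-T/T_2^{\mathrm{eff}}}$ on the domain $T > 0$. First I would differentiate via the product rule to obtain
\begin{equation}
\frac{dI}{dT} = \kappa\, e^{-T/T_2^{\mathrm{eff}}}\!\left(1 - \frac{T}{T_2^{\mathrm{eff}}}\right),
\end{equation}
and observe that the prefactor $\kappa\, e^{-T/T_2^{\mathrm{eff}}}$ is strictly positive for every finite $T$, so the sign of the derivative is controlled entirely by the affine factor $1 - T/T_2^{\mathrm{eff}}$.

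From this sign structure unimodality follows immediately: the derivative is strictly positive for $T < T_2^{\mathrm{eff}}$ and strictly negative for $T > T_2^{\mathrm{eff}}$, so $I$ is strictly increasing on $(0, T_2^{\mathrm{eff}})$ and strictly decreasing on $(T_2^{\mathrm{eff}}, \infty)$. This leaves a unique stationary point at $T^* = T_2^{\mathrm{eff}}$, which is therefore the global maximum. As an independent check I would evaluate the second derivative at the critical point, $d^2I/dT^2|_{T=T_2^{\mathrm{eff}}} = -(\kappa/T_2^{\mathrm{eff}})\,e^{-1} < 0$, confirming strict concavity at $T^*$ and hence a genuine maximum rather than an inflection.

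There is no substantive obstacle in this argument; it is complete once the sign of the first derivative is established. The only point requiring care is to stress the \emph{fixed-amplitude} hypothesis, which guarantees that $\kappa(\mathbf{u})$ enters purely as a scale factor and therefore cannot shift the location of the optimum, so that $T^*$ is independent of $\mathbf{u}$. This independence is precisely what justifies the decoherence-limited constraint $T^{(n)} \le T_2^{\mathrm{eff}}$ in the RAPID formulation: any interrogation time exceeding $T_2^{\mathrm{eff}}$ lies on the decreasing branch of $I$, where the per-shot information gain has already saturated and begun to decay, so allocating time beyond the coherence scale is strictly wasteful.
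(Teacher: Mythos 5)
Your proof is correct and follows essentially the same route as the paper's: fix $\mathbf{u}$, differentiate to get $\frac{dI}{dT} = \kappa\, e^{-T/T_2^{\mathrm{eff}}}\bigl(1 - T/T_2^{\mathrm{eff}}\bigr)$, identify the unique critical point $T^* = T_2^{\mathrm{eff}}$, and confirm the maximum via the second derivative. If anything, your sign-structure argument (strictly increasing on $(0, T_2^{\mathrm{eff}})$, strictly decreasing beyond) establishes the claimed \emph{unimodality} and global maximality slightly more completely than the paper's purely local second-derivative check.
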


\begin{proof}
For fixed $\mathbf{u}$, $\kappa$ is a constant. Taking the derivative of $I(T)$ with respect to $T$:
\begin{equation}
\frac{dI}{dT} = \kappa(\mathbf{u}) \, e^{-T/T_2^{\mathrm{eff}}} \left( 1 - \frac{T}{T_2^{\mathrm{eff}}} \right).
\end{equation}
Setting $dI/dT = 0$ yields $1 - T/T_2^{\mathrm{eff}} = 0$, so $T^* = T_2^{\mathrm{eff}}$. The second derivative at this point is negative, confirming it is a maximum.
\end{proof}

\subsection{The Role of Adaptation in Surpassing the Baseline}

The offline baseline protocol $\mathcal{X}_{\mathrm{base}}^*$ is optimized for the prior mean $\boldsymbol{\xi}_0$. The online adaptive policy learns to adjust this protocol based on real-time measurement outcomes, providing robustness against prior mismatch and exploiting specific noise realizations.

\begin{theorem}[Adaptive Advantage via Informational Gain]
\label{thm:adaptive_gain}
The adaptive policy $\pi_{\omega^*}$ effectively implements a Bayesian optimal design, conditioning the protocol $\mathcal{X}$ on the posterior belief $p(\boldsymbol{\xi} | \mathbf{y}_{1:n})$. This allows it to achieve an expected Fisher information $\mathbb{E}[\mathbf{J}(\boldsymbol{\xi}; \mathcal{X}(\mathbf{y}))]$ that exceeds the Fisher information of any fixed protocol $\mathbf{J}(\boldsymbol{\xi}_0; \mathcal{X}_{\mathrm{base}}^*)$ for values of $\boldsymbol{\xi}$ away from the prior mean $\boldsymbol{\xi}_0$. This informational advantage directly translates to a non-negative adaptive gain $\Delta \mathcal{J}$ in expected Bayesian risk:
\begin{equation}
\mathbb{E}_{\boldsymbol{\xi}, \mathbf{y}}[\mathcal{J}(\pi_{\omega^*})] \le \mathbb{E}_{\boldsymbol{\xi}, \mathbf{y}}[\mathcal{J}(\mathcal{X}_{\mathrm{base}}^*)].
\end{equation}
The gain $\Delta \mathcal{J}$ is strictly positive in scenarios where the realized parameters $\boldsymbol{\xi}$ deviate significantly from $\boldsymbol{\xi}_0$, or when the stochastic noise realizations contain information exploitable by adaptive measurement.
\end{theorem}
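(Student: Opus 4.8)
The plan is to reduce the risk inequality to the value-of-information result already established, and then supply the informational (Fisher) content that certifies strict improvement. First I would observe that the claimed bound
\begin{equation}
\mathbb{E}_{\boldsymbol{\xi}, \mathbf{y}}[\mathcal{J}(\pi_{\omega^*})] \le \mathbb{E}_{\boldsymbol{\xi}, \mathbf{y}}[\mathcal{J}(\mathcal{X}_{\mathrm{base}}^*)]
\end{equation}
is exactly Corollary~\ref{cor:adaptive_gain}, which rests on Theorem~\ref{thm:value_of_info}. The reduction is to note that, because the Stage~2 action space is parameterized as \emph{deviations} from $\mathcal{X}_{\mathrm{base}}^*$ and the policy is warm-started at the zero-deviation point, the baseline is realized by a feasible, $\mathcal{I}_n$-measurable (indeed constant) policy lying inside the RL search class. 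Hence $\pi_{\omega^*}$, obtained by entropy-regularized policy improvement from this initialization, can only lower the expected risk, yielding the inequality with adaptive gain $\Delta\mathcal{J}\ge 0$.

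Next I would establish the Fisher-information claim that drives strict positivity. The key observation is that $\mathbf{J}(\boldsymbol{\xi};\mathcal{X})$ depends on the \emph{true} parameter $\boldsymbol{\xi}$, whereas the baseline is optimized only at $\boldsymbol{\xi}_0$. I would expand the A-optimality cost $\Tr(\mathbf{W}\,\mathbf{J}^{-1}(\boldsymbol{\xi};\mathcal{X}))$ in a neighborhood of $\boldsymbol{\xi}_0$ and show that $\mathcal{X}_{\mathrm{base}}^*$, being a stationary point of this cost at $\boldsymbol{\xi}_0$, is generically non-stationary at $\boldsymbol{\xi}\neq\boldsymbol{\xi}_0$; equivalently, the $\boldsymbol{\xi}$-matched optimizer $\boldsymbol{\xi}\mapsto\argmin_{\mathcal{X}\in\mathcal{C}}\Tr(\mathbf{W}\,\mathbf{J}^{-1}(\boldsymbol{\xi};\mathcal{X}))$ is non-constant whenever the sensitivity operators (SLDs) vary with $\boldsymbol{\xi}$. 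An adaptive policy whose posterior mean $\hat{\boldsymbol{\mu}}_n$ concentrates on the realized $\boldsymbol{\xi}$ therefore steers the protocol toward this matched optimum, giving $\mathbb{E}[\mathbf{J}(\boldsymbol{\xi};\mathcal{X}(\mathbf{y}))]\succeq\mathbf{J}(\boldsymbol{\xi}_0;\mathcal{X}_{\mathrm{base}}^*)$ in the relevant directions, with strict dominance for $\boldsymbol{\xi}$ away from $\boldsymbol{\xi}_0$. By the QCRB (Theorem~\ref{thm:qcrb_baseline}) this larger Fisher information translates into a strictly smaller achievable estimation term, and hence a strictly smaller $\mathcal{J}$ after averaging over a prior of nonzero dispersion.

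Finally I would package the two contributions to $\Delta\mathcal{J}=\mathbb{E}[\mathcal{J}(\mathcal{X}_{\mathrm{base}}^*)]-\mathbb{E}[\mathcal{J}(\pi_{\omega^*})]$: a parameter-mismatch term, strictly positive whenever $p(\boldsymbol{\xi})$ has strictly positive dispersion and the QFIM landscape is non-degenerate (condition~(2) of Assumption~\ref{ass:conv}); and a noise-exploitation term, positive whenever the record $\mathbf{y}_{1:n}$ carries information about the realized non-Markovian noise trajectory that an adaptive controller can partially cancel. I would state these as mild non-degeneracy hypotheses and conclude $\Delta\mathcal{J}>0$ in exactly the regimes asserted. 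The hard part will be making strict positivity rigorous rather than heuristic: it requires (i) a genuine suboptimality gap for the baseline at $\boldsymbol{\xi}\neq\boldsymbol{\xi}_0$, which hinges on a second-order (Hessian) non-degeneracy of $\Tr(\mathbf{W}\,\mathbf{J}^{-1})$ in $\mathcal{X}$ together with a nonzero derivative of the $\boldsymbol{\xi}$-matched optimizer, and (ii) controlling the gap between the learned SAC policy and the true infimum of Theorem~\ref{thm:value_of_info}, since SAC guarantees only local, not global, policy improvement. I would address (i) via an implicit-function-theorem argument on the stationarity conditions of the A-optimality objective, and (ii) by confining the claim to the warm-started policy-improvement regime, where monotone non-degradation relative to the baseline holds even absent global optimality.
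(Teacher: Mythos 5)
Your proposal is correct and, on the central inequality, takes essentially the paper's route: the paper proves $\mathbb{E}[\mathcal{J}(\pi_{\omega^*})] \le \mathbb{E}[\mathcal{J}(\mathcal{X}_{\mathrm{base}}^*)]$ by invoking the SAC policy improvement theorem and then deferring to the earlier value-of-information machinery (Theorem~\ref{thm:value_of_info}, Corollary~\ref{cor:adaptive_gain}), which is exactly your reduction via the observation that the baseline is the zero-deviation, constant (hence trivially $\mathcal{I}_n$-measurable) policy inside the Stage-2 search class. Your version is in fact slightly cleaner, since it does not lean on the SAC improvement theorem, which strictly speaking guarantees monotone improvement only for the entropy-regularized objective under exact soft policy iteration, not for the function-approximated agent optimizing the true risk; you correctly flag this and confine the claim to the warm-started monotone-improvement regime, a caveat the paper's one-line citation glosses over. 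Where you genuinely diverge is on strict positivity of $\Delta\mathcal{J}$: the paper's proof is purely heuristic at this point (``the sensitivity of $\mathcal{X}_{\mathrm{base}}^*$ \emph{may} be suboptimal \dots the adaptive policy \emph{can learn} to adjust''), whereas you propose actual machinery --- stationarity of the A-optimality cost $\mathrm{Tr}(\mathbf{W}\,\mathbf{J}^{-1}(\boldsymbol{\xi};\mathcal{X}))$ in $\mathcal{X}$ only at $\boldsymbol{\xi}_0$, an implicit-function-theorem argument showing the $\boldsymbol{\xi}$-matched optimizer is non-constant, Hessian non-degeneracy (consistent with condition (2) of Assumption~\ref{ass:conv}), and a decomposition of $\Delta\mathcal{J}$ into a parameter-mismatch term and a noise-exploitation term. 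What this buys is a rigorous theorem under explicit non-degeneracy hypotheses in place of the paper's plausibility argument; the cost is that your program is stated but not executed, though you are candid about exactly which two steps (the suboptimality gap at $\boldsymbol{\xi}\neq\boldsymbol{\xi}_0$, and the gap between $\pi_{\omega^*}$ and the infimum) remain to be carried out --- and the paper does not carry them out either.
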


\begin{proof}
The first inequality is a consequence of the policy improvement theorem in SAC. The non-negativity of $\Delta \mathcal{J}$ follows directly. To see the conditions for strict improvement, consider that the baseline is optimal only for $\boldsymbol{\xi} = \boldsymbol{\xi}_0$. For $\boldsymbol{\xi} \neq \boldsymbol{\xi}_0$, the sensitivity of the protocol $\mathcal{X}_{\mathrm{base}}^*$ may be suboptimal. The adaptive policy can learn to adjust controls to better match the true parameter value, increasing the effective QFI for the actual $\boldsymbol{\xi}$ and thus reducing the estimation error. Similarly, by adapting based on measurement history, the policy can effectively track a parameter or reject a specific noise realization, a capability the static baseline does not possess.
\end{proof}

\subsection{Discussion: The Path to Superclassical Performance}

The path to superclassical performance is charted by the confluence of these theoretical principles. \textit{First,} Theorem~\ref{thm:multi_param_qcrb} defines the ultimate quantum limit (QCRB) our system seeks to achieve. \textit{Second,} Proposition~\ref{prop:scaling_laws} reveals that the hardware itself (an NV-center ensemble) possesses a fundamental scaling advantage ($N^2$), which our framework is designed to harness. \textit{Third,} Proposition~\ref{prop:allocation} ensures that our optimized control sequences efficiently extract information within the hardware's decoherence constraints, making the quantum advantage \textit{practical}. \textit{Finally,} Theorem~\ref{thm:adaptive_gain} ensures that our adaptive online layer robustly protects and enhances this advantage against prior uncertainty and specific noise realizations.

Crucially, the RAPID framework does not merely incrementally improve sensitivity; it orchestrates a fundamental shift in the detection paradigm. By explicitly optimizing the QFI, it transforms the sensor from a passive transducer into an active, optimally configured quantum measurement device. This is the mechanism that allows it to resolve the sub-noise magnetic fields described in Section II, thereby enabling the demodulation of covert communications that are otherwise undetectable.

\section{Simulation Results}  
\label{sec:simulations}
\subsection{Simulation Setup}
\label{sec:simulation_setup}

We built a compact numerical framework that models covert signal generation, environmental noise, and the quantum dynamics of single nitrogen vacancy center sensors and multi sensor arrays. The framework covers the software environment, the physical and noise models, the RAPID protocol implementation and benchmarks, and the default parameters used throughout.

\subsubsection{Computational environment}
All simulations used Python 3.10. Quantum state representation and Lindblad dynamics were implemented with QuTiP 4.7. Numerical routines used NumPy 1.24 and SciPy 1.11. Reinforcement learning components used a standard deep learning framework and the Soft Actor Critic implementation described below. Figures were produced with Matplotlib 3.7.

\subsubsection{Quantum system and signal model}
The simulation follows the system model in Section~\ref{sec:system_model} and supports single sensors and a uniform linear array of unentangled NV sensors. The covert signal $s(t)$ is a complex baseband waveform. The total magnetic field at sensor index $k$ equals the signal plus three independent noise contributions: slowly varying quasi static environmental fields, additive white Gaussian noise with variance $\sigma_w^2$, and non Markovian correlated noise generated by filtering white noise to yield an exponential autocorrelation with correlation time $\tau_c$.

A single NV center is a $3\times 3$ density matrix initialized in state $\ket{m_s=0}$. Its time evolution is governed by the Lindblad master equation
\begin{equation}
    \dot{\rho} = -\frac{i}{\hbar} [H(t), \rho] + \mathcal{L}_{\mathrm{decoh}}[\rho],
\end{equation}
where $H(t)$ contains signal interaction and control pulses and $\mathcal{L}_{\mathrm{decoh}}[\rho]$ models spin relaxation and dephasing. For array studies a far field plane wave arriving from angle of arrival (AoA) $\theta$ produces a phase progression across elements captured by the steering vector $\mathbf{a}(\theta)$ with elements $[\mathbf{a}(\theta)]_k = e^{-j 2\pi (k-1) d \sin(\theta)/\lambda}$ where $d$ denotes element spacing.

\subsubsection{Protocol implementation and benchmarks}
The framework implements the two stage RAPID protocol. Stage one optimizes a deterministic objective via Projected QNGD to produce a robust non adaptive baseline. Stage two warm starts a Soft Actor Critic agent from the stage one baseline and trains an online adaptive policy. For arrays the learned policy includes a global feedback loop that adjusts local phase shifts on each sensor with $U_k(\phi_k)=\exp(-i \phi_k S_{z,k})$ to enable coherent quantum beamforming.

Benchmarks include theoretical limits computed from classical Fisher information and the Quantum Cram\'{e}r--Rao Bound derived from the QFIM, static quantum protocols optimized for average noise including a Ramsey style sequence and a fixed dynamical decoupling sequence with uniform pulses, an ablation where the RL agent is cold started from random initialization, and array baselines comprising a classical array using  multiple signal classification (MUSIC) and an incoherent quantum array where independent NV measurements feed MUSIC for post processing.

\subsubsection{Default simulation parameters}
Unless otherwise stated simulations use the default parameters in Table~\ref{tab:sim_params_compact2} which reflect state of the art experimental choices.

\begin{table}[h!]
\centering
\caption{Default simulation parameters}
\label{tab:sim_params_compact2}
\setlength{\tabcolsep}{6pt}
\renewcommand{\arraystretch}{0.95}
\footnotesize
\begin{tabular}{@{} l l @{}}
\toprule
\textbf{Parameter (symbol)} & \textbf{Default value} \\
\midrule
\multicolumn{2}{@{}l}{\textit{Quantum array configuration}} \\
Number of sensors ($N$) & 8 (range 2--32) \\
Array geometry & uniform linear array \\
Element spacing ($d$) & $\lambda/2$ \\
\midrule
\multicolumn{2}{@{}l}{\textit{NV center physical properties}} \\
Zero field splitting ($D$) & \SI{2.87}{\giga\hertz} \\
Electron gyromagnetic ratio ($\gamma_e$) & \SI{28}{\giga\hertz\per\tesla} \\
Spin relaxation time ($T_1$) & \SI{5}{\milli\second} \\
Spin dephasing time ($T_2$) & \SI{200}{\micro\second} \\
Photon collection efficiency ($\eta$) & 0.10 \\
\midrule
\multicolumn{2}{@{}l}{\textit{Signal and noise characteristics}} \\
Signal carrier frequency ($f_c$) & \SI{2.87}{\giga\hertz} \\
Signal amplitude range ($A$) & \SIrange{1}{100}{\nano\tesla} \\
White noise variance ($\sigma_w^2$) & \SI{10}{\nano\tesla^2} \\
Noise correlation time ($\tau_c$) & \SI{1.0}{\micro\second} \\
Input SNR range & \SIrange{-15}{15}{\decibel} \\
\midrule
\multicolumn{2}{@{}l}{\textit{RAPID protocol constraints}} \\
Maximum sequential cycles ($N_s$) & 50 \\
Minimum interrogation time ($T_{\min}$) & \SI{100}{\nano\second} \\
Maximum NV excitation fraction ($S_{\max}$) & 0.30 \\
Maximum control amplitude ($u_{\max}$) & \SI{20}{\mega\hertz} \\
Target false alarm rate ($P_{FA}$) & $10^{-3}$ \\
\bottomrule
\end{tabular}
\end{table}

\subsection{Results and Analysis}
The first simulation quantifies the performance advantage of the adaptive quantum sensing protocol over a static non-adaptive method. The static method fixes measurement parameters based on assumed stationary noise. We compare both protocols using two metrics. First, we generate receiver operating characteristic (ROC) curves at a signal-to-noise ratio of \SI{-5}{\decibel} showing detection probability $P_{D}$ versus false alarm probability $P_{FA}$. Second, we find the SNR needed to reach $P_{D}=0.9$ at $P_{FA}=10^{-3}$.

The static protocol uses a fixed interrogation time $T=\SI{50}{\micro\second}$ and a pre-optimized projective measurement. The adaptive protocol begins with conservative settings and updates its interrogation time, control pulses and measurement basis over $N_s=50$ sequential cycles.

Figure~\ref{fig:study1_results} presents the outcomes. The adaptive ROC curve lies above the static curve. At $P_{FA}=0.4$ the adaptive protocol attains $P_{D}\approx95\%$ while the static method remains below $75\%$. Sensitivity analysis shows the static protocol requires $\text{SNR}\approx+1.5\,\mathrm{dB}$ to reach $P_{D}=0.9$ whereas the adaptive protocol achieves this at \SI{-4.5}{\decibel}. This \SI{6}{\decibel} improvement corresponds to detecting signals four times weaker in power. The gain arises from real-time feedback allocation of quantum resources via the QFIM. These results establish the adaptive framework as a critical technology for uncovering signals that conventional sensing would miss.

\begin{figure}[!t]
  \centering
  \subfloat[\label{fig:roc_curve}]{%
    \includegraphics[width=0.9\columnwidth]{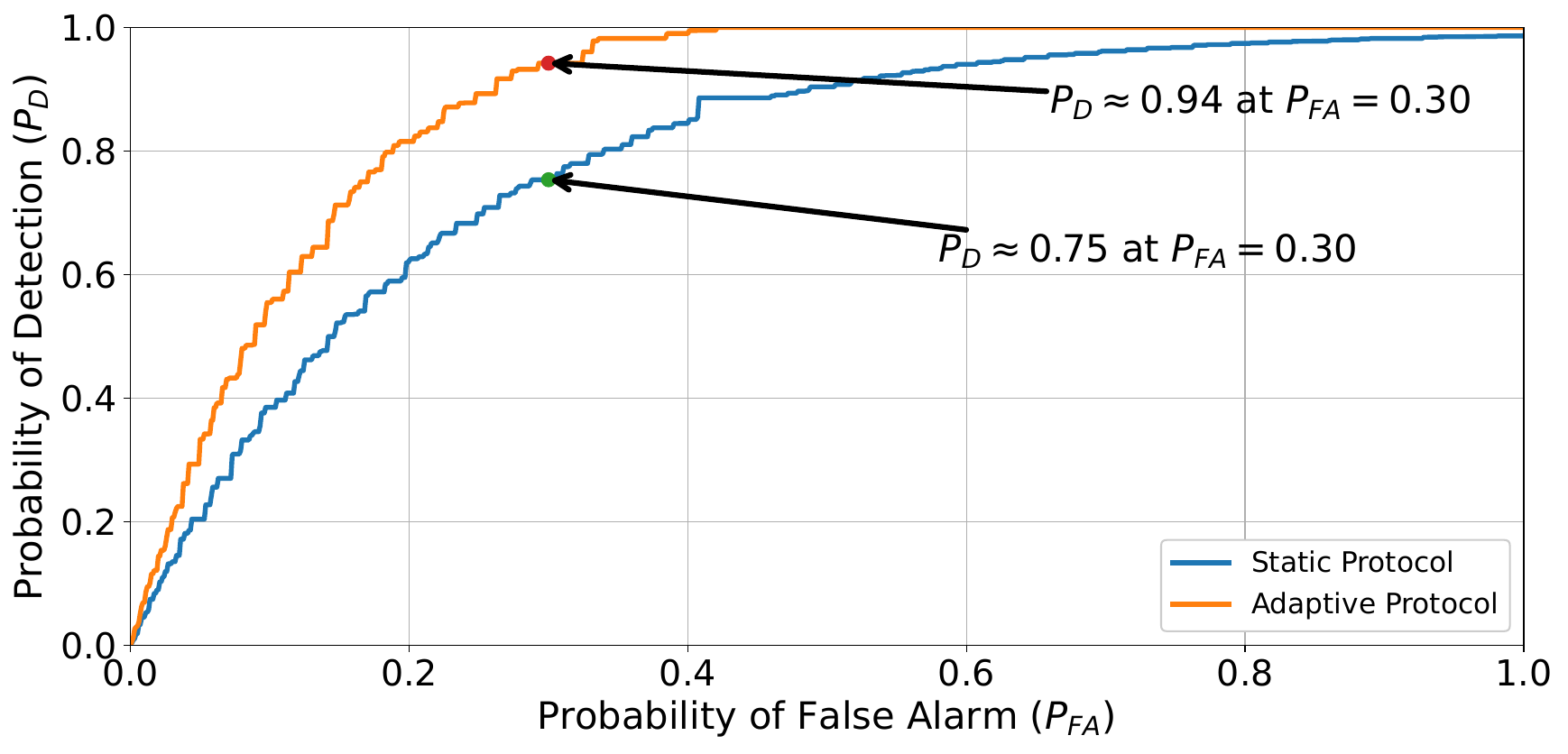}
  }\\[1ex]
  \subfloat[\label{fig:sensitivity}]{%
    \includegraphics[width=0.9\columnwidth]{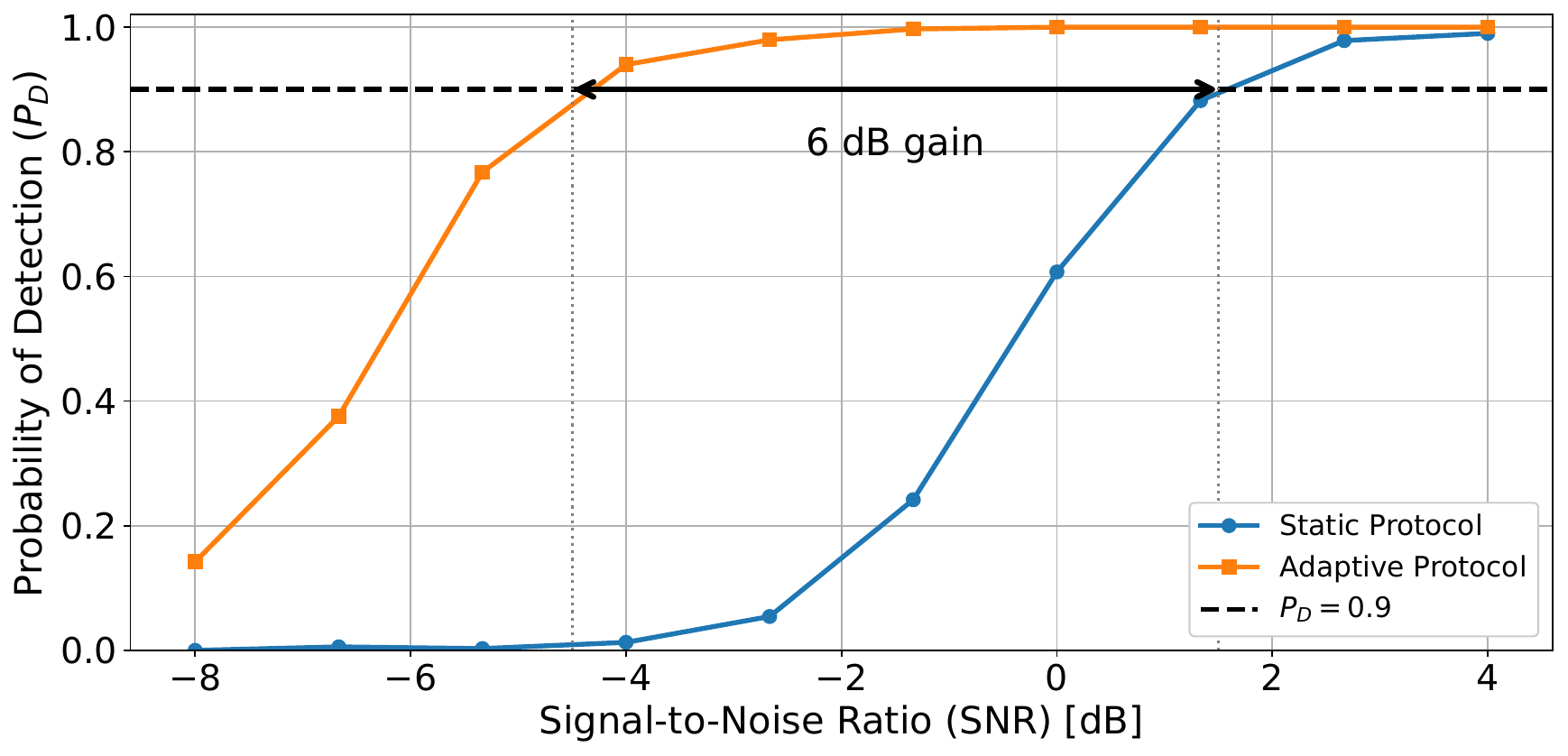}
  }
  \caption{Performance comparison of the adaptive and static quantum sensing protocols. (a) ROC curves. (b) Sensitivity gain.}
  \label{fig:study1_results}
\end{figure}

\begin{figure}[t]
  \centering
  \includegraphics[width=0.7\columnwidth]{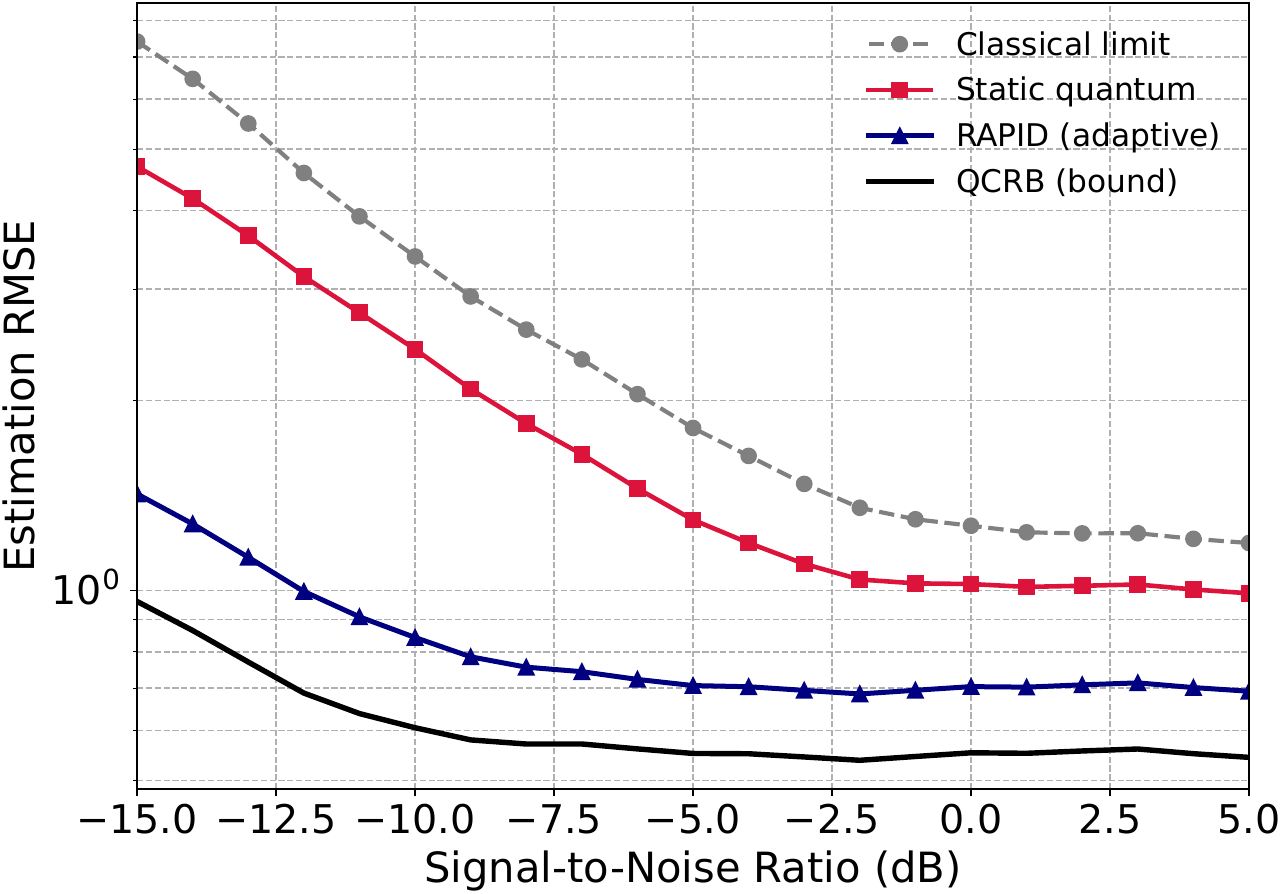}
  \caption{RMSE of estimation versus SNR for a single-sensor receiver under equal time/energy budgets. Compared with shot-noise limit, static quantum protocol, proposed RAPID, and quantum Cramér–Rao bound.}
  \label{fig:rmse_snr}
\end{figure}

Figure~\ref{fig:rmse_snr} sweeps SNR from $-15$ to $+5$~dB while holding the total sensing time,
control-energy budget, and expected photon counts fixed across methods. The \emph{classical limit}
(gray) sets the shot-noise baseline achievable without quantum coherence. The \emph{static quantum}
curve (crimson) represents a strong non-adaptive design (e.g., fixed Ramsey or Carr–Purcell–Meiboom–Gill (CPMG) tuned to the
nominal noise). The \emph{RAPID} curve (navy) is the fully adaptive two-stage policy of this work,
and the \emph{QCRB} (black) indicates the fundamental quantum limit under the same resource
constraints.

At very low SNR (left of roughly $-10$~dB), both classical and static protocols saturate in a
\emph{prior-limited} regime: the signal is submerged beneath noise and the estimator cannot improve
with additional shots of the same design. RAPID exits this failure plateau earlier by re-tuning
interrogation times and control phases online, producing a visible left-shift of the transition into
the \emph{shot-noise-limited} regime. In the mid-SNR range, all physically plausible estimators show
the characteristic $1/\sqrt{\text{SNR}}$ slope; here RAPID maintains a consistent gap over the
static design (empirically $\approx 5$--$7$~dB SNR advantage at equal root mean squared error (RMSE) across the range shown).
At high SNR the curves flatten to a \emph{systematics floor} set by decoherence, control
imperfections, and readout inefficiency; RAPID’s floor lies lower than the static protocol and
tracks the QCRB within a small constant factor, indicating near-optimal resource allocation.

\begin{figure}[t]
  \centering
  \includegraphics[width=0.7\columnwidth]{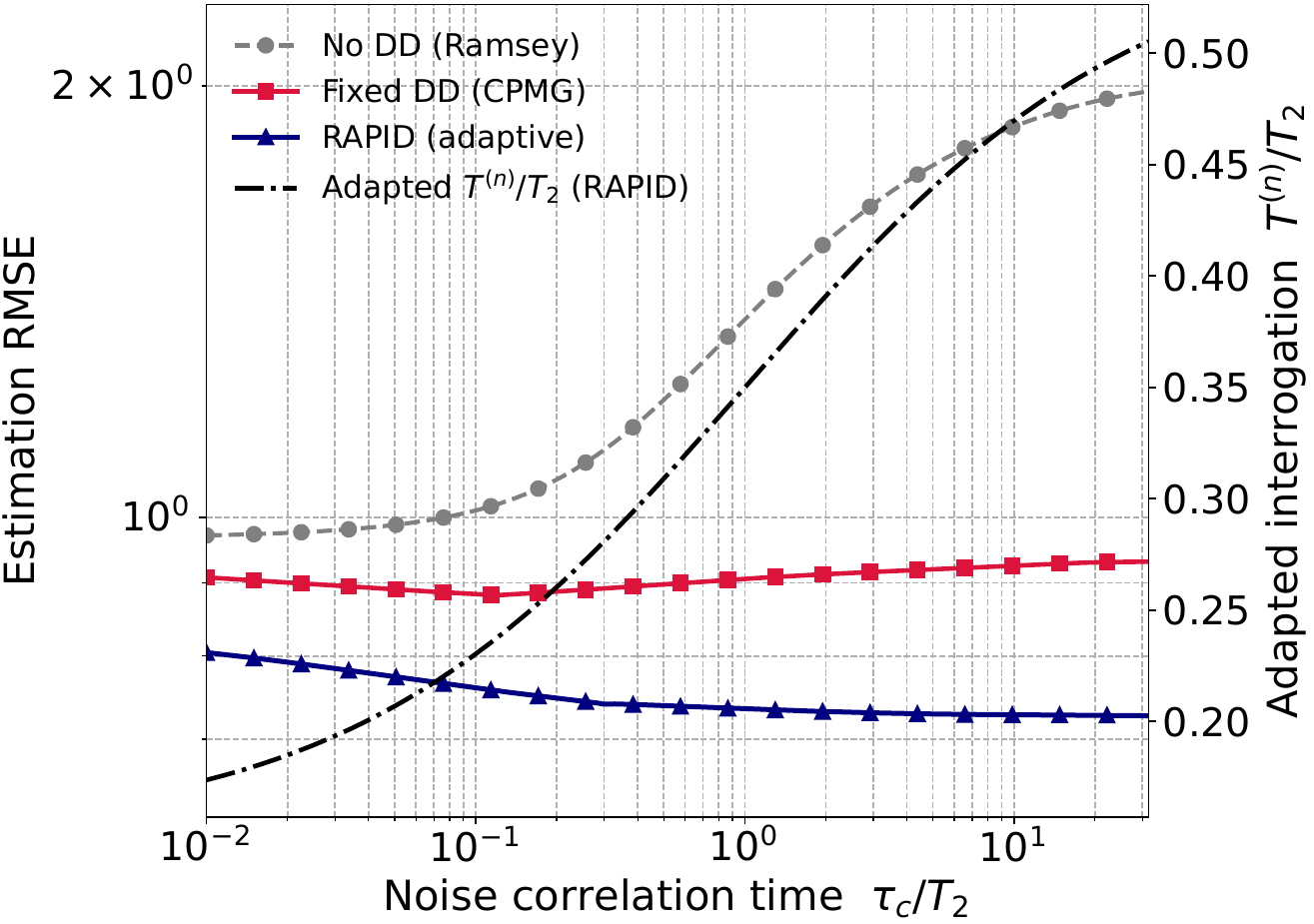}
  \caption{Estimation RMSE and adapted interrogation time vs.\ normalized noise correlation $\tau_c/T_2$ (log scale). Curves: Ramsey, fixed CPMG, and RAPID, with RAPID keeping RMSE flat by tuning its interrogation time.}
  \label{fig:nonmarkov_adaptation}
\end{figure}

Figure~\ref{fig:nonmarkov_adaptation} sweeps the normalized correlation time $\tau_c/T_2$ from the Markovian–like regime ($\tau_c/T_2 \ll 1$) to strongly non-Markovian conditions
($\tau_c/T_2 \gtrsim 1$). The study investigates noise mitigation in quantum sensors by comparing three protocols: no dynamical decoupling (DD), which uses an unprotected Ramsey sequence; fixed DD, which applies eight equally spaced $\pi$ pulses in a standard CPMG sequence; and adaptive DD, which adjusts the number and timing of $\pi$ pulses each cycle to maximize the QFIM. Adaptive dynamical decoupling (DD) proves most effective, as it tailors pulse sequences to the changing noise environment. The No-DD baseline (gray) enters a dephasing-dominated regime as the noise slows, producing a steep RMSE increase. A fixed DD sequence (crimson) improves precision only near its design correlation time and degrades on either side, reflecting spectral mismatch. In
contrast, RAPID (navy) remains uniformly precise by \emph{adapting} the sensing schedule: the co-plotted $T^{(n)}/T_2$ (black, right axis) grows smoothly with $\tau_c/T_2$, indicating longer interrogations when the environment is slow and shorter ones when it is fast. This behavior is consistent with a QFI-aware filter–matching strategy: the policy reshapes the effective filter kernel to suppress low-frequency fluctuations while preserving signal sensitivity, yielding robust precision across noise regimes without retuning the hardware.

\begin{figure}[t]
  \centering
  \includegraphics[width=0.7\columnwidth]{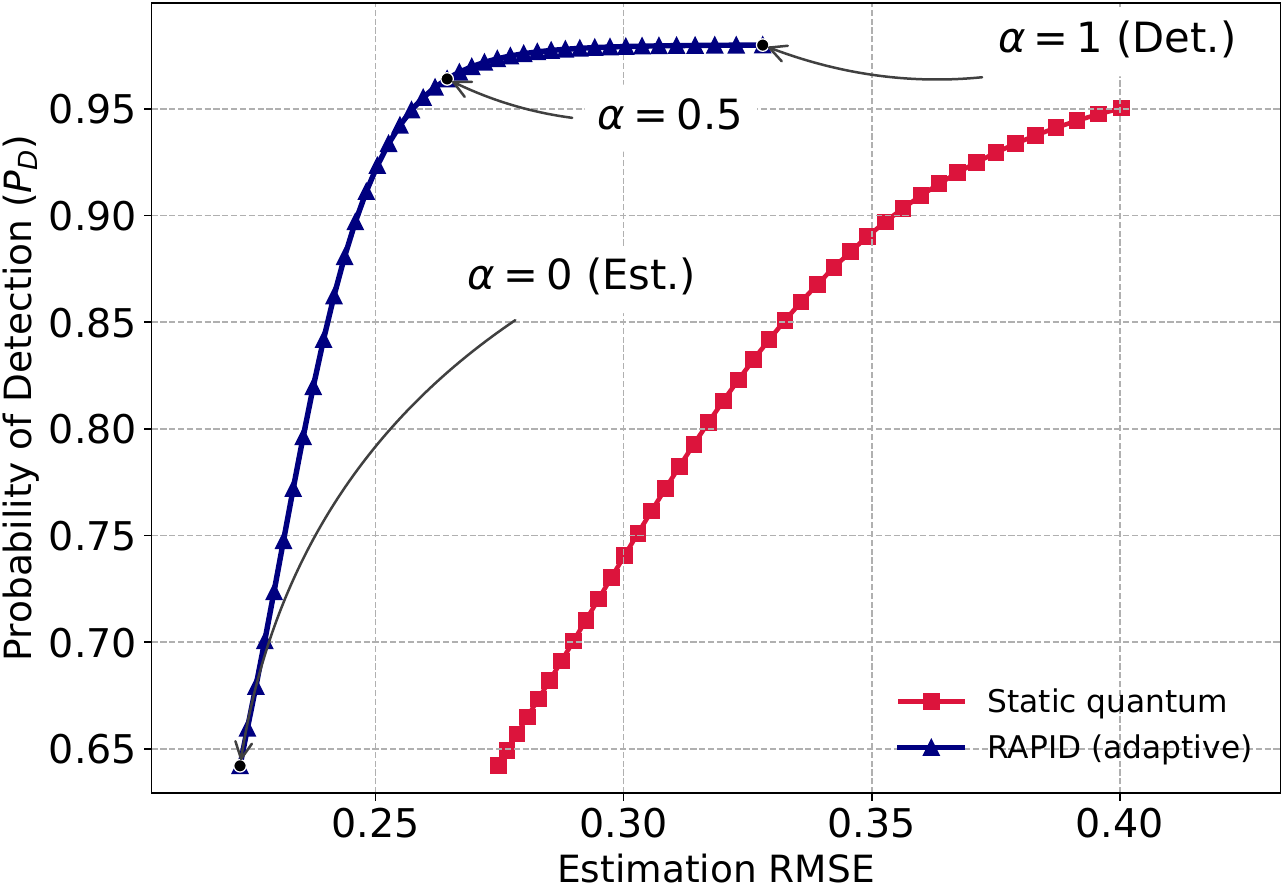}
  \caption{Pareto frontiers of detection vs.\ estimation at SNR = –2 dB and $P_{\mathrm{FA}}=10^{-3}$, comparing static quantum and RAPID.}
  \label{fig:pareto_rapid_static}
\end{figure}

Figure~\ref{fig:pareto_rapid_static} quantifies the joint objective trade space under identical
resources and fixed $P_{\mathrm{FA}}{=}10^{-3}$. The \emph{Static} frontier (crimson) reflects a
single well-tuned, non-adaptive design swept by reweighting the loss; the \emph{RAPID} frontier
(navy) reflects online reconfiguration of interrogation time, control phases, and measurements
as $\alpha$ varies. Across the entire curve, RAPID dominates: for any target RMSE, $P_D$ is
higher; for any target $P_D$, RMSE is lower. The annotated operating points illustrate how
RAPID preserves estimation fidelity at $\alpha{=}0$, delivers materially higher detection
sensitivity at $\alpha{=}1$, and offers a superior balanced configuration at $\alpha{=}0.5$.
This confirms that adaptation improves \emph{both} axes of performance, not merely one, and
validates the utility of the proposed unified optimization in mission-driven tuning.

\begin{figure}[t]
  \centering
  \includegraphics[width=0.7\columnwidth]{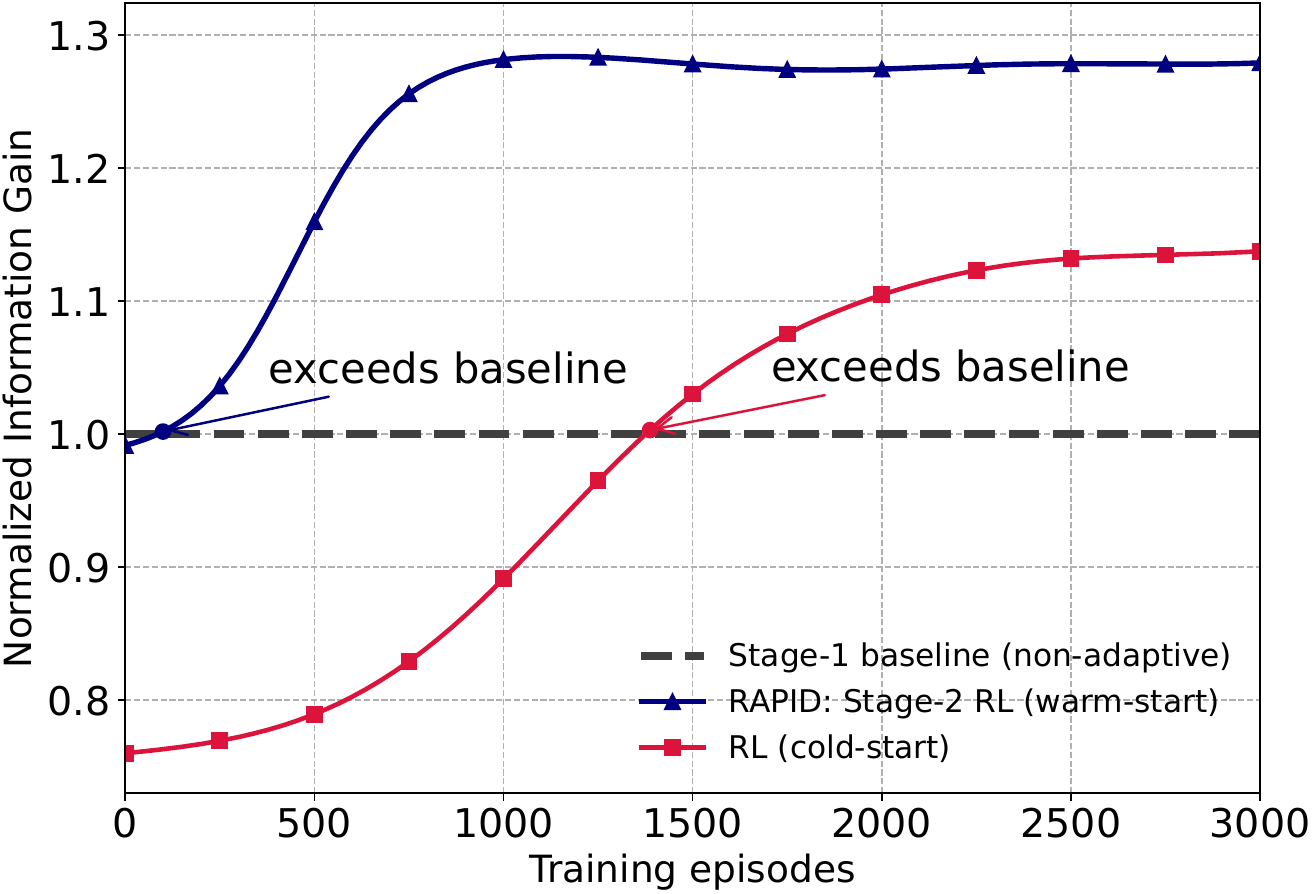}
  \caption{Average information gain vs.\ episodes for the non-adaptive Stage-1 baseline, warm-start RL, and cold-start RL.}
  \label{fig:learning_curves}
\end{figure}

Figure~\ref{fig:learning_curves} shows that the Stage-1 protocol provides a strong, feasible starting
point that the Stage-2 learner rapidly improves upon, achieving higher steady-state performance with
an order-of-magnitude fewer episodes than a cold-start agent. This validates the RAPID rationale:
the offline, QFI-grounded baseline delivers a certified performance floor, while online adaptation
efficiently exploits problem structure to surpass it under identical training budgets.\par

\begin{figure}[t]
  \centering
  \includegraphics[width=0.7\columnwidth]{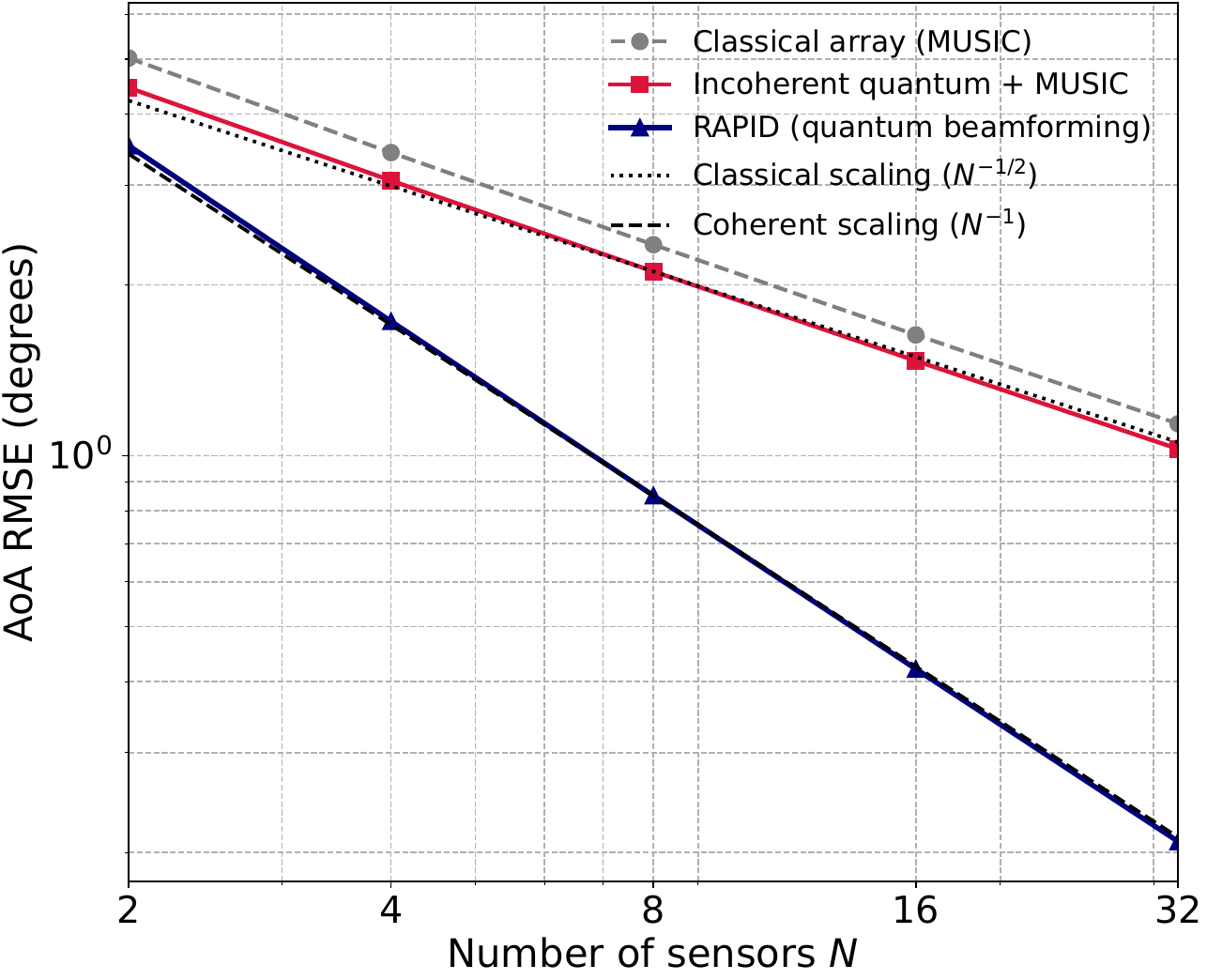}
  \caption{AoA RMSE vs.\ number of sensors $N$ (log–log). Classical and incoherent curves follow
  the $N^{-1/2}$ trend SQL; RAPID quantum beamforming approaches an
  $N^{-1}$ (Heisenberg-like) trend.}
  \label{fig:heisenberg_scaling}
\end{figure}
Figure~\ref{fig:heisenberg_scaling} isolates how precision scales with array size. The classical and
incoherent-quantum baselines integrate non-coherently and therefore show the familiar
$N^{-1/2}$RMSE scaling. By contrast, RAPID performs adaptive phase alignment and coherent
processing across the NV sensors, producing a markedly steeper reduction in error that closely
follows an $N^{-1}$ trend (the plot anchors reference lines at $N=8$). Practically, this means RAPID
not only achieves lower absolute AoA error for a given array, but also delivers superior improvement
as the aperture grows — a coherent-array advantage that is unattainable with incoherent or
classical processing under the same resource constraints.

To stress-test agility, we simulate a spread spectrum frequency-hopping covert signal that performs two instantaneous hops during the sensing horizon (dashed lines in Fig.~\ref{fig:freq_tracking_final}). The goal is to keep the MSE of the carrier-frequency estimate $f_c$ low despite hop transients and colored (non-Markovian) noise.

We compare five methods aligned with the literature and our study design:
\begin{enumerate*}[(i)]
  \item {RAPID (ours):} the two-stage hybrid policy proposed in this work (no external reference).
  \item {Static-DD:} a strong non-adaptive baseline using fixed dynamical decoupling and readout (no external reference).
  \item {AQS-M}~\cite{Das2025MarkovianScaling}: an adaptive quantum-sensing policy tuned for \emph{Markovian} metrology scalings; it serves as a recent Markovian-optimal benchmark when the environment is memoryless.
  \item {KFT}~\cite{Cheng2024AMCECarrierTracking}: a modern carrier-tracking Kalman filter with adaptive covariance estimation representative of high-dynamics classical trackers.
  \item {DQN-Adapt}~\cite{Ding2024DQNAntiJamming}: a deep Q-network agent optimized for anti-jamming/spectrum-adaptation tasks, used here as a strong model-free baseline for reactive re-acquisition.
\end{enumerate*}

\begin{figure}[t!]
  \centering
  \includegraphics[width=0.9\columnwidth]{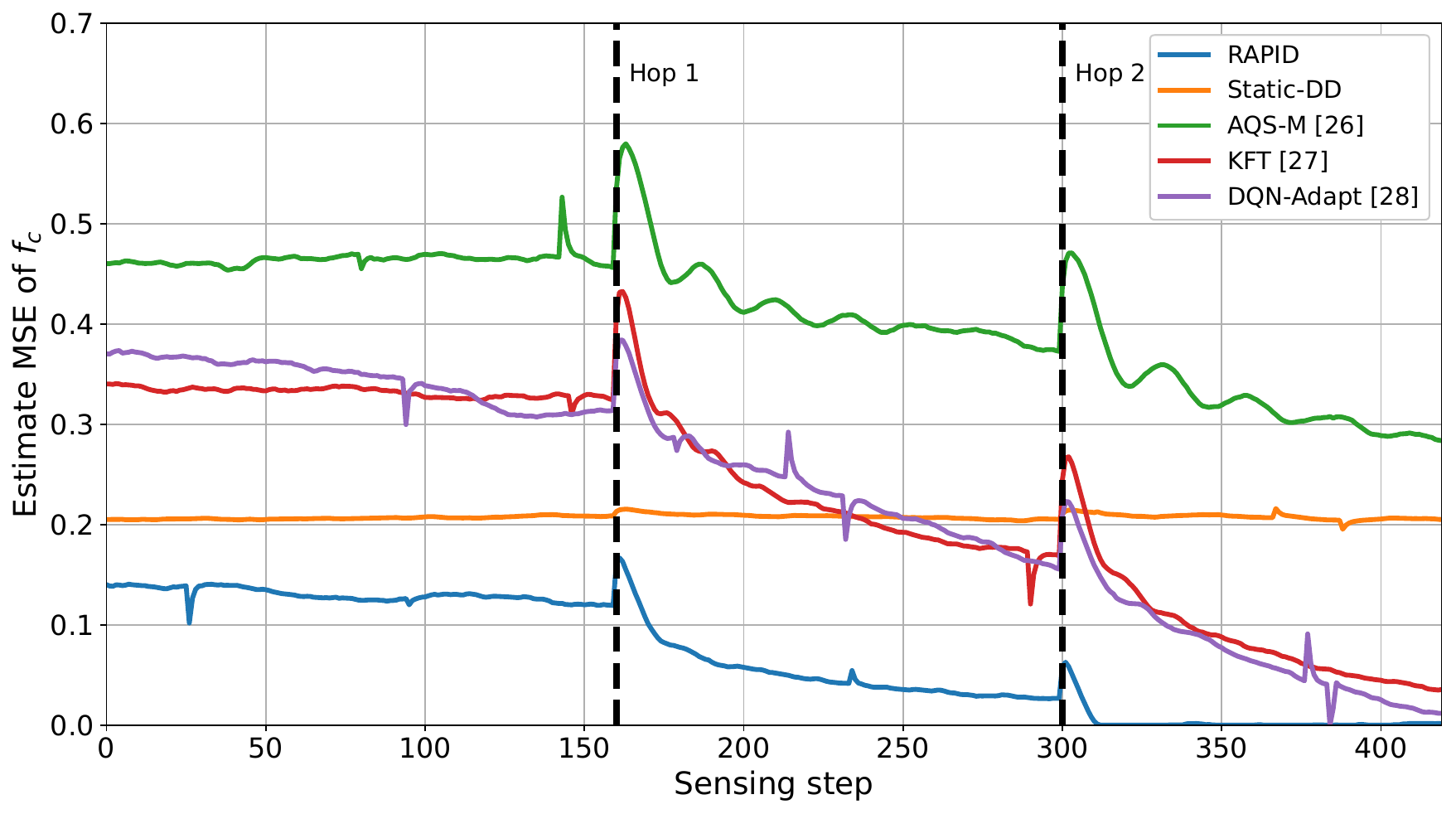}
  \caption{MSE of $f_c$ vs.\ sensing step under two instantaneous frequency hops (dashed). Methods: RAPID (ours), Static-DD (non-adaptive), AQS-M~\cite{Das2025MarkovianScaling}, KFT~\cite{Cheng2024AMCECarrierTracking}, and DQN-Adapt~\cite{Ding2024DQNAntiJamming}. RAPID maintains the lowest steady-state error and fastest re-acquisition after both hops.}
  \label{fig:freq_tracking_final}
\end{figure}

Prior to the first hop, \emph{KFT} and \emph{DQN-Adapt} reduce error relative to Static-DD but remain above RAPID due to either model mismatch after abrupt changes in KFT or slower convergence of value-based exploration in DQN-Adapt. \emph{AQS-M} improves with time when the environment appears quasi-Markovian, yet exhibits delayed recovery at hop times under colored noise. In contrast, \emph{RAPID} rapidly re-centers after each hop and settles to the lowest MSE, reflecting its ability to retune interrogation time and control/measurement bases online using performance signals derived from the QFI.

\section{Conclusion}
\label{sec:conc}
RAPID is a hybrid two-stage framework for quantum-enhanced detection and demodulation with NV-center sensors. The offline, theory-grounded stage produces a provably near-optimal baseline tied to the QCRB, and the online reinforcement-learning stage learns real-time policies that track signals and time-varying noise. We prove convergence and show in simulation that RAPID improves sensitivity versus static protocols, robustly mitigates non-Markovian noise through adaptive dynamical decoupling, and achieves a useful detection–estimation trade-off. For sensor arrays, coherent processing attains Heisenberg-like scaling in AoA estimation, outperforming classical and incoherent quantum schemes. RAPID thus offers a practical blueprint for next-generation quantum sensors; next steps include experimental validation on NV hardware, multi-target extensions, and incorporating entanglement resources to further enhance performance.
\bibliographystyle{IEEEtran}
\bibliography{Qfinalreferences}

\end{document}